\numberwithin{equation}{section}
\theoremstyle{plain}
\newtheorem{thm}{Theorem}[section]
\newtheorem{prop}{Proposition}[section]
\newtheorem{rmk}{Remark}[section]
\newtheorem{cor}{Corollary}[section]
\newtheorem{lem}{Lemma}[section]
\def\1{\mathbf{1}}
\newcommand{\E}{\ensuremath{\mathbb{E}}}
\renewcommand{\P}{\ensuremath{\mathbb{P}}}
\newcommand{\R}{\ensuremath{\mathbb{R}}}
\renewcommand{\L}[1]{\ensuremath{\mathbb{L}^{#1}}}
\renewcommand{\L}{\ensuremath{\mathbb{L}}}
\renewcommand{\d}{\mathrm{d}}
\newcommand{\pg}[1]{\left\{#1\right\}}
\newcommand{\pq}[1]{\left[#1\right]}
\newcommand{\vir}[1]{``#1''}
\newcommand{\pushright}[1]{\ifmeasuring@#1\else\omit\hfill$\displaystyle#1$\fi\ignorespaces}
\newcommand{\pushleft}[1]{\ifmeasuring@#1\else\omit$\displaystyle#1$\hfill\fi\ignorespaces}
\begin{document}

\allowdisplaybreaks


\inserttype[]{article}
\renewcommand{\thefootnote}{\fnsymbol{footnote}}

\author{S. Donnet and V. Rivoirard and J. Rousseau and C. Scricciolo}{
 \fnms{Sophie}
 \snm{Donnet}
 \footnotemark[1]\ead{email1@example.com}
and
\fnms{Vincent}
\snm{Rivoirard}
\footnotemark[2]\ead{email2@example.com}
and
\fnms{Judith}
\snm{Rousseau}
\footnotemark[3]\ead{email3@example.com}
and
\fnms{Catia}
\snm{Scricciolo}
\footnotemark[4]\ead{email4@example.com}
}

\title[Posterior concentration rates for Aalen counting processes]
{Posterior concentration rates for counting processes with Aalen multiplicative intensities}

\maketitle

\footnotetext[1]{
 MIA. INRA. UMR0518, AgroParisTech,
 \href{donnet@cimat.mx}{donnet@cimat.mx}
}

\footnotetext[2]{
 CEREMADE, Universit\'e Paris Dauphine,
 \href{rivoirard@ceremade.dauphine.fr}{rivoirard@ceremade.dauphine.fr}}

 \footnotetext[3]{
 ENSAE-CREST,
 \href{rousseau@ceremade.dauphine.fr}{rousseau@ceremade.dauphine.fr}}

 \footnotetext[4]{
 Department of Decision Sciences, Bocconi University,
 \href{catia.scricciolo@unibocconi.it}{catia.scricciolo@unibocconi.it}
 }

\renewcommand{\thefootnote}{\arabic{footnote}}

\begin{abstract}
We provide general conditions to derive posterior concentration rates for Aalen counting processes.
The conditions are designed to resemble 
those proposed in the literature for the problem of density estimation, for instance in \citet{ghosal:ghosh:vdv:00}, so that existing results on density estimation
can be adapted to the present setting.
We apply the general theorem to some prior models including Dirichlet process mixtures of uniform densities
to estimate monotone non-increasing intensities and log-splines. 

\keywords{
\kwd{Aalen model},
\kwd{counting processes},
\kwd{Dirichlet process mixtures},
\kwd{posterior concentration rates}
}
\end{abstract}


\section{Introduction} \label{intro}
Estimation of the intensity function of a point process is an important statistical problem with a long
history. Most methods were initially employed for estimating intensities
assumed to be of parametric or nonparametric form in Poisson point processes.
However, in many fields such as genetics, seismology and neuroscience, the probability of observing a
new occurrence of the studied temporal process may depend on covariates and, in this case,
the intensity of the process is random so that such a feature is not captured by a
classical Poisson model. Aalen models constitute a natural extension of Poisson models
that allow taking into account this aspect. \cite{aalen1978} revolutionized point processes
analysis developing a unified theory for frequentist nonparametric inference
of multiplicative intensity models which, besides the Poisson model and
other classical models such as right-censoring and 
Markov processes with finite state space, described in Section~\ref{subsec:notations},
encompass birth and death processes as well as branching processes.
We refer the reader to  \citet{ABGK} for a presentation of Aalen processes including various other illustrative examples.
Classical probabilistic and statistical results about Aalen processes can be found in  \citet{Karr}, \citet{ABGK}, \citet{MR1950431,MR2371524}. Recent nonparametric frequentist methodologies based on penalized least-squares contrasts have been proposed by \citet{MR2235573, MR2440441}, \citet{MR2884230} and \citet{Aalen-Patricia}. In the high-dimensional setting, more specific results have been established by \citet{GG} and \citet{HRR} who consider Lasso-type procedures.

Bayesian nonparametric inference for inhomogeneous Poisson point processes has been
considered by \citet{lo1982} who develops a prior-to-posterior analysis
for weighted gamma process priors to model intensity functions. In the same spirit,
\citet{kuoandghosh} employ several classes of nonparametric priors,
including the gamma, the beta and the extended gamma processes. Extension to
multiplicative counting processes has been treated in
\citet{loandweng1989}, who model intensities as kernel mixtures with mixing measure distributed according to a weighted
gamma measure on the real line. Along the same lines, \citet{ishwaranandjames} develop computational procedures for Bayesian
non- and semi-parametric multiplicative intensity models using kernel mixtures of weighted gamma measures.
Other papers have mainly focussed on exploring prior distributions on intensity functions
with the aim of showing that Bayesian nonparametric inference for inhomogeneous Poisson
processes can give satisfactory results in applications, see, \emph{e.g.},
\citet{kottas:sansò:2007}.

Surprisingly, leaving aside the recent work of \citet{belitser:serra:vanzanten},
which deals with optimal convergence rates for estimating intensities
in inhomogeneous Poisson processes,
there are no results in the literature concerning aspects of the frequentist asymptotic behaviour of posterior distributions,
like consistency and rates of convergence, for intensity estimation of general Aalen models.
In this paper, we extend their results to general Aalen multiplicative intensity models.
Quoting \citet{loandweng1989}, \vir{the idea of our approach is that estimating a density and estimating a hazard rate are analogous affairs,
and a successful attempt of one generally leads to a feasible approach for the other}.
Thus, in deriving general sufficient conditions for assessing posterior contraction rates
in Theorem \ref{th:gene:aalen} of Section \ref{aalen},
we attempt at giving conditions which resemble those proposed by \citet{ghosal:ghosh:vdv:00}
for density estimation with independent and identically distributed (i.i.d.) observations.
This allows us to then derive in Section \ref{sec:priors} posterior contraction rates for different
families of prior distributions, such as Dirichlet mixtures of uniform densities to estimate monotone
non-increasing intensities and log-splines, by an adaptation of existing results on density estimation.
Detailed proofs of the main results are reported in Section \ref{app:proof}.
Auxiliary results concerning the control of the Kullback-Leibler divergence for intensities in Aalen models
and existence of tests, which, to the best of our knowledge, are derived here for the first time and can also
be of independent interest, are presented in Section \ref{app:Aalen} and in Section \ref{app}.

\subsection{Notation and set-up} \label{subsec:notations}
We observe a counting process $N$ and denote by $(\mathcal G_t)_t$ its adapted  filtration. Let  $\Lambda$ be the compensator of $N$.
We assume it satisfies the condition $\Lambda_t<\infty$ almost surely for every $t$. Recall that  $(N_t-\Lambda_t)_t$ is a zero-mean $(\mathcal G_t)_t$-martingale.
We assume that $N$ obeys the {\it Aalen multiplicative intensity model}
$$\d\Lambda_t=Y_t\lambda(t)\d t,$$
where $\lambda$ is a non-negative deterministic function called \emph{intensity function} in the sequel and $(Y_t)_t$ is a non-negative predictable process. Informally,
\begin{equation}\label{informally}
\P[N[t,\,t+\d t]\geq 1\mid \mathcal G_{t^-}]=Y_t\lambda(t)\d t,
\end{equation}
see \cite{ABGK}, Chapter III. In this paper, we are interested in asymptotic results: both $N$ and $Y$ depend on an integer $n$ and we study estimation of $\lambda$ (not depending on $n$) when $T$ is kept fixed and $n\rightarrow\infty$.
The following special cases motivate the interest in this model.
\vspace*{-0.4cm}\subsubsection*{Inhomogeneous Poisson processes}
\vspace*{-0.2cm}We observe $n$ independent Poisson processes with common intensity $\lambda$. This model is equivalent to the model where we observe a Poisson process with intensity $n\times\lambda$, so it corresponds to the case $Y_t\equiv n$.
\vspace*{-0.4cm}\subsubsection*{Survival analysis with right-censoring}
\vspace*{-0.2cm}This model is popular in biomedical problems. We have $n$ patients and, for each patient $i$, we observe  $(Z_i,\,\delta_i)$, with $Z_i=\min\{X_i,\,C_i\}$, where $X_i$ represents the lifetime of the patient, $C_i$ is the independent censoring time and $\delta_i=\1_{X_i\leq C_i}$. In this case, we set $N_t^i=\delta_i\times \1_{Z_i\leq t}$, $Y_t^i=\1_{Z_i\geq t}$ and $\lambda$ is the hazard rate of the $X_i$'s: if $f$ is the density of $X_1$, then
$\lambda(t)=f(t)/\P(X_1\geq t).$
Thus, $N$ (respectively $Y$) is obtained by aggregating the $n$ independent processes $N^i$'s (respectively the $Y^i$'s): for any $t\in [0,\,T]$,
$N_t=\sum_{i=1}^n N^i_t$ and $  Y_t=\sum_{i=1}^n Y^i_t$.
\vspace*{-0.4cm}\subsubsection*{Finite state Markov processes}
\vspace*{-0.2cm}Let  $X=(X(t))_t$ be a Markov process with finite state space $\mathbb{S}$ and right-continuous sample paths. We assume the existence of  integrable {\it transition intensities} $\lambda_{hj}$ from state $h$ to state $j$ for $h\not= j$. We assume we are given $n$ independent copies of the process $X$, denoted by $X^1,\,\ldots,\,X^n$. For any $i\in\{1,\,\ldots,\,n\}$,
let $N_t^{ihj}$ be the number of direct transitions for $X^i$ from $h$ to $j$  in $[0,\,t]$, for $h\not=j$. Then, the intensity of the multivariate counting process $\frak{N}^i=(N^{ihj})_{h\not=j}$ is $(\lambda_{hj}Y^{ih})_{h\not=j},$ with $Y^{ih}_t=\1_{\{X^i(t^-)=h\}}$. As before, we can consider $\frak{N}$ (respectively $Y^h$) by aggregating the processes $\frak{N}^i$ (respectively the $Y^{ih}$'s): $\frak{N}_t =\sum_{i=1}^n \frak{N}^{i}_t$, $Y_t^h=\sum_{i=1}^n Y^{ih}_t$ and $t\in [0,\,T]$.
The intensity of each component $(N_t^{hj})_t$ of $(\frak{N}_t)_t$  is then $(\lambda_{hj}(t)Y_t^h)_t$.
We refer the reader to \cite{ABGK}, p.~126, for more details. In this case, $N$ is either one of the $N^{hj}$'s or the aggregation of some processes for which the $\lambda_{hj}$'s are equal.

We now state some conditions concerning the asymptotic behavior of $Y_t$ under the true intensity function $\lambda_0$.
Define $\mu_n(t):=\E_{\lambda_0}^{(n)}\left[ Y_t\right]$ and $\tilde\mu_n(t):=n^{-1}\mu_n(t)$.
We assume the existence of a non-random
set $\Omega \subseteq [0,\, T]$ such that there are constants $m_1$ and $m_2$ satisfying
\begin{equation}\label{ass:Y1}
m_1\leq \inf_{t\in\Omega}\tilde \mu_n(t)\leq \sup_{t\in\Omega}\tilde \mu_n(t)\leq m_2\quad\mbox{for every $n$ large enough,}
\end{equation}
and there exists $\alpha\in (0,\,1)$ such that, if
$\Gamma_n:=\{ \sup_{t\in\Omega}|n^{-1}Y_t-\tilde\mu_n(t)|\leq \alpha m_1\}\cap \{\sup_{t\in\Omega^c}  Y_t = 0\}$,
where $\Omega^c$ is the complement of $\Omega$ in $[0,\,T]$, then
\begin{equation}\label{ass:Y2}
\lim_{n\rightarrow\infty} \P_{\lambda_0}^{(n)}\left(\Gamma_n\right)= 1.
\end{equation}
We only consider estimation over $\Omega$ ($N$ is almost surely empty on $\Omega^c$) and define the parameter space as
 $\mathcal F = \{ \lambda : \Omega \rightarrow \R_+\,\, |\,  \int_{\Omega} \lambda(t)\d t < \infty \}$.
Let 
$\lambda_0\in\mathcal F$.

For inhomogeneous Poisson processes, conditions \eqref{ass:Y1} and \eqref{ass:Y2} are trivially satisfied for $\Omega=[0,\,T]$ since $Y_t\equiv\mu_n(t)\equiv n.$ For right-censoring models, with $Y_t^i=\1_{Z_i\geq t}$, $i=1,\,\ldots,\,n$, we denote by $\Omega$ the support of the $Z_i$'s and by $M_{\Omega}=\max\Omega\in \overline\R_+.$ Then, \eqref{ass:Y1} and \eqref{ass:Y2} are satisfied if $M_{\Omega}>T$ or $M_{\Omega}\leq T$ and $\P(Z_1=M_{\Omega})>0$ (the concentration inequality is implied by an application of the DKW inequality).

We denote by $\| \cdot \|_1$ the $\L_1$-norm over $\mathcal{F}:$ for all $\lambda,\,\lambda'\in \mathcal F$,
$\| \lambda - \lambda' \|_1  = \int_{\Omega} | \lambda(t)  -  \lambda'(t)  | \d t$.


\section{Posterior contraction rates for Aalen counting processes} \label{aalen}

In this section, we present the main result providing general sufficient conditions for assessing concentration rates of
posterior distributions of intensities in general Aalen models. Before stating the theorem, we need to introduce some
more notation.

For any $\lambda \in \mathcal F$,
we introduce the following parametrization $\lambda  = M_\lambda\times\bar \lambda$, where $M_\lambda= \int_{\Omega} \lambda(t)\d t$ and $\bar \lambda  \in \mathcal F_1$, with
$\mathcal F_1 = \{ \lambda \in \mathcal F: \ \int_\Omega \lambda(t) \d t = 1\}.$
For the sake of simplicity, in this paper we restrict attention to
the case where $M_\lambda$ and $\bar\lambda$ are \emph{a priori} independent so that the
prior probability measure $\pi$ on $\mathcal F$ is the product measure $\pi_1 \otimes \pi_M$, where $\pi_1$ is a probability measure on $\mathcal F_1$ and $\pi_M $ is a probability measure on $\R_+$.
Let $v_n$ be a positive sequence such that $v_n\rightarrow0$ and $n v_n^2\rightarrow\infty$.  For every $j \in\mathbb{N}$, we define
$$\bar S_{n,j} = \left\{ \bar \lambda \in \mathcal F_1:\,\| \bar \lambda - \bar \lambda_0 \|_1 \leq 2(j+1)v_n/M_{\lambda_0}\right\},$$
where
$M_{\lambda_0}=\int_{\Omega}\lambda_0(t)\d t$ and $\bar\lambda_0=M_{\lambda_0}^{-1}\lambda_0$.
For $H>0$ and $k\geq 2$,  if $k_{[2]} = \min \{ 2^\ell:\, \ell\in \mathbb N,\, 2^\ell \geq k\}$, we define
\begin{equation*}
\begin{split}
\bar B_{k,n}(\bar \lambda_0;\,v_n,\,H)=& \bigg\{ \bar \lambda \in \mathcal F_1:\, \ h^2(\bar \lambda_0,\, \bar \lambda)
\leq v_n^2/(1 + \log  \| \bar \lambda_0/\bar \lambda\|_\infty ),\\& \qquad\qquad \max_{2\leq j \leq k_{[2]} }E_j(\bar \lambda_0 ;\, \bar \lambda) \leq v_n^2, \,\,
\|\bar \lambda_0/\bar \lambda\|_\infty \leq n^H,\,\, \left\|\bar \lambda\right\|_\infty \leq H \bigg\},
\end{split}
\end{equation*}
where $h^2(\bar \lambda_0,\, \bar \lambda)=\int_\Omega(\sqrt{\bar \lambda_0(t)}-\sqrt{\bar \lambda(t)})^2\d t$ is the squared Hellinger distance between $\bar \lambda_0$ and $\bar \lambda$,
$\|\cdot\|_\infty$ stands for the sup-norm and $E_j(\bar \lambda_0;\, \bar \lambda):= \int_{\Omega} \bar \lambda_0(t) | \log \bar \lambda_0  (t) - \log \bar \lambda (t)  |^j\d t$. In what follows, for any set $\Theta$ equipped with a semi-metric $d$ and any real number $\epsilon>0$,
we denote by $D( \epsilon,\, \Theta ,\, d)$ the $\epsilon$-packing number of $\Theta$, that is, the maximal number of points in $\Theta$
such that the $d$-distance between every pair is at least $\epsilon$. Since $D( \epsilon,\, \Theta ,\, d)$ is bounded above by the $(\epsilon/2)$-covering number, namely, the minimal number of balls of $d$-radius $\epsilon/2$ needed to cover $\Theta$, with abuse of language,
we will just speak of covering numbers. We denote by $\pi(\cdot\mid N)$ the posterior distribution of the intensity function $\lambda$,
given the observations of the process $N$.

\begin{thm}\label{th:gene:aalen}
Assume that conditions \eqref{ass:Y1} and \eqref{ass:Y2} are satisfied and that,
for some $k\geq 2$, there exists a constant $C_{1k}>0$ such that
\begin{equation}\label{moment}
\E_{\lambda_0}^{(n)}\left[ \left(  \int_\Omega [ Y_t-\mu_n(t)]^2\d t \right)^k  \right] \leq C_{1k} n^k.
\end{equation}
Assume that the prior $\pi_M$ on the mass $M$ is absolutely continuous with respect to Lebesgue measure and has
positive and continuous density on $\R_+$, while
the prior $\pi_1$ on $\bar \lambda$ satisfies the following conditions for some constant $H>0$:
\begin{itemize}
\item[$(i)$] there exists $\mathcal F_n \subseteq \mathcal F_1$ such that, for a positive sequence
$v_n=o(1)$ and $v_n^2 \geq (n/\log n)^{-1}$,
$$\pi_1\left( \mathcal F_n^c \right) \leq e^{ -(\kappa_0+2) n v_n^2}\pi_1(\bar B_{k,n}(\bar \lambda_0;\,v_n,\,H)),  $$
with
\begin{equation}\label{kap0}
\kappa_0=m_2^2M_{\lambda_0}
\pg{\frac{4}{m_1}\pq{1+\log\left(\frac{m_2}{m_1}\right)}
\left(1+\frac{m_2^2}{m_1^2}\right)+\frac{m_2(2M_{\lambda_0}+1)^2}{m_1^2M_{\lambda_0}^2}},
\end{equation}
 and, for any $\xi,\,\delta>0$,
$$\log D( \xi,\,\mathcal F_n,\,\| \cdot \|_1 )\leq n \delta \quad\mbox{ for all $n$ large enough;}$$

\item[$(ii)$] for all $\zeta,\,\delta>0$,
there exists $J_0>0$ such that, for every $j\geq J_0$, 
$$\frac{ \pi_1(\bar S_{n,j}) }{ \pi_1(\bar B_{k,n}(\bar \lambda_0;\,v_n,\,H) )} \leq e^{\delta (j+1)^2 n v_n^2 } $$
and
$$ \log D( \zeta jv_n,\,\bar S_{n,j} \cap \mathcal F_n,\,\| \cdot \|_1 ) \leq \delta (j+1)^2 n v_n^2 .$$
\end{itemize}
Then, there exists a constant $J_1>0$ such that
$$ \E_{\lambda_0}^{(n)}[\pi(\lambda: \ \| \lambda - \lambda_0\|_{1} >  J_1 v_n \mid N)] = O((n v_n^2)^{-k/2}).$$
\end{thm}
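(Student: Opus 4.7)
The plan is to follow the Ghosal--Ghosh--van der Vaart template, adapted to counting processes. Write
$$\pi(A\mid N)=\frac{\int_A R_n(\lambda)\,d\pi(\lambda)}{\int R_n(\lambda)\,d\pi(\lambda)},\qquad R_n(\lambda)=\lik_n(\lambda)/\lik_n(\lambda_0).$$
Via the Doob--Meyer decomposition, $\log R_n(\lambda)=\int_0^T\log(\lambda/\lambda_0)\,dN_t-\int_0^T(\lambda-\lambda_0)Y_t\,dt$, so controlling the likelihood ratio reduces to a martingale term plus a drift depending on the random $Y_t$. Throughout I would work on the event $\Gamma_n$ of condition \eqref{ass:Y2}, using \eqref{ass:Y1} to replace $Y_t/n$ by $\tilde\mu_n(t)\in[m_1,m_2]$ up to an error factor $(1\pm\alpha)$.

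First, I would establish a lower bound for the denominator $D_n$. Using the Kullback--Leibler and moment controls for Aalen intensities stated in Section~\ref{app:Aalen}, on $\bar B_{k,n}(\bar\lambda_0;v_n,H)$ together with $M_\lambda$ in a small interval $I_n$ around $M_{\lambda_0}$, both $\E_{\lambda_0}^{(n)}[-\log R_n(\lambda)]$ and the centred $k$-th absolute moment of $\log R_n(\lambda)$ under $\P_{\lambda_0}^{(n)}$ are bounded by constant multiples of $nv_n^2$ and $(nv_n^2)^{k/2}$ respectively; the Hellinger and $E_j$ conditions defining $\bar B_{k,n}$ are exactly what the KL-to-Hellinger conversion for Aalen models demands, and the constant $\kappa_0$ of \eqref{kap0} is what emerges once $m_1,m_2,M_{\lambda_0}$ and the fluctuation term \eqref{moment} of $Y_t-\mu_n(t)$ are tracked. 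A Chebyshev bound applied to these moments then gives, on an event of $\P_{\lambda_0}^{(n)}$-probability at least $1-C(nv_n^2)^{-k/2}$,
$$D_n\ge e^{-(\kappa_0+1)nv_n^2}\,\pi_1(\bar B_{k,n}(\bar\lambda_0;v_n,H))\,\pi_M(I_n),$$
with $\pi_M(I_n)$ bounded below since $\pi_M$ has positive continuous density at $M_{\lambda_0}$. The polynomial, rather than exponential, tail here is forced by our only assuming a $k$-th moment in \eqref{moment}; this is precisely what propagates to the final rate $(nv_n^2)^{-k/2}$.

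Next, for the numerator $\E_{\lambda_0}^{(n)}\int_U R_n\,d\pi$ with $U=\{\|\lambda-\lambda_0\|_1>J_1 v_n\}$, split by the sieve. The piece inside $\mathcal F_n^c$ contributes at most $\pi_1(\mathcal F_n^c)\le e^{-(\kappa_0+2)nv_n^2}\pi_1(\bar B_{k,n})$, negligible against $D_n$. On $\mathcal F_n$, use $\|\lambda-\lambda_0\|_1\le M_\lambda\|\bar\lambda-\bar\lambda_0\|_1+|M_\lambda-M_{\lambda_0}|$ to decompose $U\cap\{\bar\lambda\in\mathcal F_n\}$ into shells $\bar S_{n,j}\setminus\bar S_{n,j-1}$ in the $\bar\lambda$-direction (for $j\ge J_0$), crossed with a partition of the $M_\lambda$-axis into intervals of length of order $v_n$, choosing $J_1$ large enough that these shells cover $U$. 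On each shell the entropy estimate in $(ii)$ feeds into the test construction for Aalen processes of Section~\ref{app}, producing tests $\phi_{n,j}$ with both error probabilities bounded by $e^{-c(j+1)^2 nv_n^2}$. Combining with the prior-mass ratio bound in $(ii)$ and Fubini,
$$\sum_{j\ge J_0}\E_{\lambda_0}^{(n)}\!\left[(1-\phi_{n,j})\int_{\bar S_{n,j}\cap\mathcal F_n}\!R_n\,d\pi\right]\lesssim\sum_{j\ge J_0}e^{-c'(j+1)^2nv_n^2}\pi_1(\bar B_{k,n})\pi_M(I_n),$$
exponentially smaller than the lower bound on $D_n$; while $\E_{\lambda_0}^{(n)}\sum_j\phi_{n,j}\lesssim e^{-cnv_n^2}$ absorbs the type-I error. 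Combining with the denominator bound and its failure probability $C(nv_n^2)^{-k/2}$ yields the claim.

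The main obstacle is the KL/moment control for Aalen processes: unlike i.i.d.\ density estimation, one must handle the martingale $\int \log(\lambda/\lambda_0)\,d(N-\int Y\lambda_0)$ and the drift $\int(\lambda-\lambda_0)Y$ simultaneously, translate Hellinger and $E_j$-type neighbourhoods of $\bar\lambda$ into neighbourhoods of $\lambda=M_\lambda\bar\lambda$, and control the random fluctuations of $Y_t$ through \eqref{moment} --- this is where the bulky constant $\kappa_0$ of \eqref{kap0} is assembled. A secondary delicate point is ensuring that the multiplicative parametrisation $\lambda=M_\lambda\bar\lambda$ does not disturb the shell argument: the $\pi_M$ contribution must be factored out cleanly in both the denominator lower bound and the numerator shell decomposition so that it cancels when comparing them.
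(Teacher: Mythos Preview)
Your proposal is correct and follows essentially the same route as the paper's proof: bound the denominator below via Proposition~\ref{prop:KL:Aalen} and Markov's inequality on the $k$-th moment (this is where the $(nv_n^2)^{-k/2}$ rate originates), and control the numerator via the tests of Proposition~\ref{prop:test:Aalen} on $\|\cdot\|_1$-shells together with the sieve and prior-mass conditions, all restricted to $\Gamma_n$. One caveat on the shell step: the paper slices $U_n$ directly in $\lambda$-space into $S_{n,j}(v_n)=\{jv_n<\|\lambda-\lambda_0\|_1\le(j+1)v_n\}$ and only afterwards uses the \emph{reverse} inequality $\|\lambda-\lambda_0\|_1\ge\max\{(M_\lambda\vee M_{\lambda_0})\|\bar\lambda-\bar\lambda_0\|_1/2,\,|M_\lambda-M_{\lambda_0}|\}$ to bound $\pi(S_{n,j}(v_n))\le\pi_1(\bar S_{n,j})$; your product decomposition in $(\bar\lambda,M_\lambda)$, justified only by the triangle-type upper bound you cite, does not by itself cover $U$ with shells indexed by $j\ge J_0$ (for instance $\bar\lambda=\bar\lambda_0$ with $|M_\lambda-M_{\lambda_0}|$ large lies in $U$ but in $\bar S_{n,0}$).
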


\medskip

The proof of Theorem \ref{th:gene:aalen} is reported in Section \ref{app:proof}.
To the best of our knowledge, the only other paper dealing with posterior concentration rates
in related models is that of \citet{belitser:serra:vanzanten}, where inhomogeneous Poisson processes are considered.
Theorem \ref{th:gene:aalen} differs in two aspects from their Theorem 1.
Firstly, we do not confine ourselves to inhomogeneous Poisson processes. Secondly and more importantly, our
conditions are different: we do not assume that $\lambda_0$ is bounded below away from zero
and we do not need to bound from below the prior mass in neighborhoods of $\lambda_0$ for the sup-norm,
rather the prior mass in neighborhoods of $\lambda_0$ for the Hellinger distance, as in Theorem 2.2 of \citet{ghosal:ghosh:vdv:00}.
In Theorem \ref{th:gene:aalen}, our aim is to propose conditions to assess posterior concentration rates for intensity
functions resembling those used in the density model obtained
by parameterizing $\lambda $ as $\lambda=M_\lambda \times \bar \lambda$, with $\bar \lambda$ a probability density on $\Omega$.
\begin{rmk}
If $\bar \lambda \in \bar B_{2,n}(\bar \lambda_0;\,v_n,\,H)$ then, for every integer $j > 2$,
$ E_j(\bar \lambda_0 ;\, \bar \lambda) \leq H^{j-2} v_n^2 (\log n)^{j-2}$
so that, using Proposition \ref{prop:KL:Aalen}, if we replace $\bar B_{k,n}(\bar \lambda_0;\,v_n,\,H)$ with
$\bar B_{2,n}(\bar \lambda_0;\,v_n,\,H)$ in the assumptions of Theorem \ref{th:gene:aalen},
we obtain the same type of conclusion: for any $k\geq 2$ such that condition \eqref{moment} is satisfied, we have
$$ \E_{\lambda_0}^{(n)} [\pi(\lambda: \ \| \lambda - \lambda_0\|_{1} >  J_1 v_n \mid N)] = O((n v_n^2)^{-k/2} (\log n)^{k(k_{[2]}-2)/2}),$$
with an extra $(\log n)$-term on the right-hand side of the above equality.
\end{rmk}

\begin{rmk}
Condition~\eqref{moment} is satisfied for the above considered examples:
it is verified for inhomogeneous Poisson processes since $Y_t=n$ for every $t$. For the censoring model, $Y_t=\sum_{i=1}^n\1_{Z_i\geq t}$.
For every $i=1,\,\ldots,\,n$, we set $V_i=\1_{Z_i\geq t}-\P(Z_1\geq t)$.
Then, for $k\geq 2$,
\[\begin{split}
\E_{\lambda_0}^{(n)}\left[ \left(  \int_\Omega [Y_t-\mu_n(t)]^2 \mathrm{d} t \right)^k  \right] &=\E_{\lambda_0}^{(n)}\left[ \left(  \int_0^T \left(\sum_{i=1}^nV_i\right)^2\mathrm{d} t \right)^k  \right] \\
&\lesssim  \int_0^T\E_{\lambda_0}^{(n)}\left[  \left(\sum_{i=1}^nV_i\right)^{2k}\right] \mathrm{d} t \\ 
&\lesssim \int_0^T\left(\sum_{i=1}^n \E_{\lambda_0}^{(n)}[V_i^{2k}]+\left(\sum_{i=1}^n\E_{\lambda_0}^{(n)}[V_i^2]\right)^k\right)\mathrm{d} t
\lesssim n^k 
\end{split}\]
by H\"{o}lder and Rosenthal inequalities (see, for instance, Theorem~C.2 of \cite{HKPT}). Under mild conditions, similar computations can be performed for finite state Markov processes.
\end{rmk}

Conditions of Theorem \ref{th:gene:aalen} are very similar to those considered for density estimation
in the case of i.i.d. observations. In particular,
 $$\bar B_n = \left\{\bar \lambda:\,   h^2( \bar \lambda_0,\, \bar \lambda)    \left\| \frac{ \bar \lambda_0 }{\bar \lambda}\right\|_\infty  \leq v_n^2, \,\,  \left\| \frac{ \bar \lambda_0 }{\bar \lambda}\right\|_\infty \leq n^H , \,\, \|\bar \lambda\|_\infty \leq H \right\} $$
 is included in $\bar B_{k,n} \left(\bar \lambda_0;\,v_n (\log n)^{1/2},\, H\right)$
 as a consequence of Theorem 5.1 of \citet{wong:shen:1995}. Apart from the mild constraints
  $\left\| { \bar \lambda_0 }/{\bar \lambda}\right\|_\infty \leq n^H$ and $\|\bar \lambda\|_\infty \leq H$, the set
  $\bar B_n$ is the same as the one considered in Theorem 2.2 of \citet{ghosal:ghosh:vdv:00}.
  The other conditions are essentially those of Theorem 2.1 in \citet{ghosal:ghosh:vdv:00}.

\section{Illustrations with different families of priors}\label{sec:priors}

As discussed in Section \ref{aalen},
the conditions of Theorem \ref{th:gene:aalen} to derive posterior contraction rates
are very similar to those considered in the literature for density estimation so that
existing results involving different families of prior distributions
can be adapted to Aalen multiplicative intensity models. Some applications are presented below.

\subsection{Monotone non-increasing intensity functions}\label{monotone}
In this section, we deal with estimation of monotone non-increasing intensity functions,
which is equivalent to considering monotone non-increasing density functions $\bar\lambda$ in the above described parametrization.
To construct a prior on the set of monotone non-increasing densities over $[0,\,T]$, we use their representation
as mixtures of uniform densities as in \citet{williamson:56} and consider a Dirichlet process as a prior on the mixing distribution:
\begin{equation}\label{DPM:unif}
\bar \lambda(\cdot) = \int_0^\infty \frac{\1_{(0,\,\theta)}(\cdot)}{ \theta} \d P(\theta) , \qquad P \mid A,\, G\sim \textrm{DP}(A G),
\end{equation}
where $G $ is a distribution on $[0,\,T]$ having density $g$ with respect to Lebesgue measure. This prior has been studied by \citet{salomond:13} for estimating monotone non-increasing densities. Here, we extend his results to the case of monotone non-increasing intensity functions of Aalen processes. We consider the same assumption on $G$ as in \citet{salomond:13}:
there exist $a_1,\, a_2 >0$ such that
\begin{equation}\label{ass:G}
\theta^{a_1} \lesssim g(\theta) \lesssim \theta^{a_2} \quad\mbox{ for all $\theta $ in a neighbourhood of $0$.}
\end{equation}
The following result holds.
\begin{cor}\label{cor:aalen}
Assume that the counting process $ N$ verifies conditions \eqref{ass:Y1} and \eqref{ass:Y2} and that inequality \eqref{moment} is satisfied for some $k\geq 2$.
Consider a prior $\pi_1$ on $\bar\lambda$ satisfying conditions \eqref{DPM:unif} and \eqref{ass:G} and a prior $\pi_M$ on $M_\lambda$ that is absolutely continuous with respect to Lebesgue measure with positive and continuous density on $\R_+$. Suppose that $\lambda_0$ is monotone non-increasing and bounded on $\R_+$.
Let $\bar\epsilon_n=(n/\log n)^{-1/3}$. Then, there exists a constant $J_1>0$ such that
$$ \E_{\lambda_0}^{(n)} [\pi(\lambda: \ \| \lambda - \lambda_0\|_{1} > J_1 \bar\epsilon_n\mid N)] = O((n \bar\epsilon_n^2)^{-k/2} (\log n)^{k(k_{[2]}-2)/2}).$$
\end{cor}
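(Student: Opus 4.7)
The plan is to apply Theorem~\ref{th:gene:aalen} with $v_n=\bar\epsilon_n=(n/\log n)^{-1/3}$, which satisfies $v_n^2\ge(n/\log n)^{-1}$. Conditions \eqref{ass:Y1}, \eqref{ass:Y2}, \eqref{moment} and the requirement on $\pi_M$ are directly assumed, so the task reduces to verifying $(i)$ and $(ii)$ for the Dirichlet process mixture of uniforms prior $\pi_1$. The architecture mirrors \citet{salomond:13}, who established analogous estimates in the i.i.d.\ density-estimation framework with the same prior. By the Remark following Theorem~\ref{th:gene:aalen} it is enough to work with the weaker neighborhood $\bar B_{2,n}(\bar\lambda_0;v_n,H)$, which is exactly what produces the extra $(\log n)^{k(k_{[2]}-2)/2}$ factor appearing in the conclusion.

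For condition~$(i)$, I would take as sieve $\mathcal F_n$ the set of uniform mixtures whose mixing distribution $P$ assigns probability at most $\bar\epsilon_n^2$ to a neighborhood $[0,\theta_n]$ of the origin, with $\theta_n\to 0$ polynomially in $n$. Combining \eqref{ass:G} with the standard Dirichlet tail bound used by Salomond yields $\pi_1(\mathcal F_n^c)\le e^{-C n v_n^2}$ for any prescribed $C$, and hence the ratio bound against $\pi_1(\bar B_{2,n}(\bar\lambda_0;v_n,H))$ required in~$(i)$ once the mass of $\bar B_{2,n}$ is shown to be at least $e^{-C'nv_n^2}$. Since every element of $\mathcal F_n$ is a monotone non-increasing density on $\Omega$ bounded by $\theta_n^{-1}$, the classical $L^1$ metric entropy bound for bounded monotone densities gives
$$\log D(\xi,\mathcal F_n,\|\cdot\|_1)\lesssim \theta_n^{-1}\xi^{-1}\log(1/\xi)=o(n)$$
for each fixed $\xi>0$, so the linear entropy requirement in~$(i)$ is met.

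The heart of the argument is the lower bound on $\pi_1(\bar B_{2,n}(\bar\lambda_0;v_n,H))$. I would construct a finite mixture of uniforms $\bar\lambda^{\star}=\sum_{\ell=1}^{K}p_\ell\theta_\ell^{-1}\1_{(0,\theta_\ell)}$ on a deterministic grid of size $K\asymp\bar\epsilon_n^{-1}$, adapting Salomond's approximation so that $\bar\lambda^{\star}\ge c\,\bar\lambda_0$ uniformly on $\Omega$. The monotonicity and boundedness of $\bar\lambda_0$ then yield $\|\bar\lambda^{\star}-\bar\lambda_0\|_1=O(\bar\epsilon_n)$ and $h^2(\bar\lambda^\star,\bar\lambda_0)=O(\bar\epsilon_n^2)$, and the uniform lower ratio control $\bar\lambda_0/\bar\lambda^\star\le 1/c$ gives $E_2(\bar\lambda_0;\bar\lambda^\star)\lesssim\bar\epsilon_n^2$ via Proposition~\ref{prop:KL:Aalen}. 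A standard small-ball estimate for the Dirichlet process on the simplex, together with \eqref{ass:G}, then produces $\pi_1(\bar B_{2,n}(\bar\lambda_0;v_n,H))\ge e^{-C nv_n^2}$. Condition~$(ii)$ is obtained by the same scheme inflated by a factor $(j+1)$: the entropy of $\bar S_{n,j}\cap\mathcal F_n$ inherits the bounded-monotone $L^1$ entropy bound and is dominated by $(j+1)^2 nv_n^2$, while the prior-mass ratio $\pi_1(\bar S_{n,j})/\pi_1(\bar B_{2,n})$ is controlled by enlarging the Dirichlet small-ball argument by the same factor.

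The main obstacle is the sup-norm control $\|\bar\lambda_0/\bar\lambda\|_\infty\le n^H$ appearing in $\bar B_{2,n}$, absent in \citet{salomond:13}, together with $\|\bar\lambda\|_\infty\le H$. Because $\lambda_0$ need not be bounded below, the approximating mixture $\bar\lambda^\star$ must be modified (by truncating the finest scales of the grid and by adding a small mass to the widest uniform component) so that it stays bounded above by $H$ while remaining uniformly bounded below by a fixed multiple of $\bar\lambda_0$ on $\Omega$. Checking that these modifications preserve both the $L^1$/Hellinger approximation error and the Dirichlet small-ball lower bound is the technically delicate step; once it is carried out, Theorem~\ref{th:gene:aalen} applied with $v_n=\bar\epsilon_n$ and $\bar B_{2,n}$ in place of $\bar B_{k,n}$ yields the claimed rate, with the logarithmic inflation of the right-hand side recorded in the Remark.
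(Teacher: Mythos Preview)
Your plan differs from the paper's at exactly the point you flag as the ``main obstacle''. The paper does \emph{not} verify conditions $(i)$--$(ii)$ of Theorem~\ref{th:gene:aalen} for $\bar\lambda_0$ directly, and in particular it never tries to engineer an approximant $\bar\lambda^\star$ with $\bar\lambda^\star\ge c\,\bar\lambda_0$ on all of $\Omega$. Instead it \emph{replaces} $\lambda_0$ by a truncation $\lambda_{0n}=M_{\lambda_0}\bar\lambda_{0n}$, where $\bar\lambda_{0n}$ is $\bar\lambda_0$ restricted to $[0,\theta_n]$ and renormalized, with $\theta_n$ chosen so that $\int_{\theta_n}^T\bar\lambda_0\le e_n/n$, $e_n=(n\bar\epsilon_n^2)^{-k/2}$. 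Because $\bar\lambda_{0n}$ is monotone non-increasing and bounded away from zero on its support, Salomond's small-ball Lemma~8 applies to it unchanged and delivers $\pi_1(\bar B_{k,n}(\bar\lambda_{0n};\bar\epsilon_n,H))\ge e^{-C_k n\bar\epsilon_n^2}$ with no modification of the mixture. The price is that the proof reopens Theorem~\ref{th:gene:aalen} at the denominator step: one controls $\P_{\lambda_0}^{(n)}(\ell_n(\lambda)-\ell_n(\lambda_0)\le-(\kappa_0+2)n\bar\epsilon_n^2)$ by conditioning on the event $A_n=\{$no point of $N$ falls in $(\theta_n,T]\}$, uses the identity $\P_{\lambda_0}^{(n)}(\cdot\mid A_n)=\P_{\lambda_{0n}}^{(n)}(\cdot)$ to pass to $\lambda_{0n}$, and bounds $\P_{\lambda_0}^{(n)}(A_n^c\mid\Gamma_n)=O(e_n)$ by comparison with a dominating Poisson process on $\Gamma_n$. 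This truncation-and-conditioning device is the key idea and is precisely how the paper sidesteps the ratio constraint $\|\bar\lambda_0/\bar\lambda\|_\infty\le n^H$ that you propose to handle by modifying $\bar\lambda^\star$.

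There is also a concrete gap in your sieve. The paper takes $\mathcal F_n=\{\bar\lambda:\bar\lambda(0)\le M_n\}$ with $M_n=\exp(c_1 n\bar\epsilon_n^2)$ and invokes Groeneboom's entropy bound $\log D(\epsilon,\mathcal F_n,\|\cdot\|_1)\asymp(\log M_n)/\epsilon$. Your sieve---mixing measures with $P([0,\theta_n])\le\bar\epsilon_n^2$---does not bound $\bar\lambda(0)=\int_0^T\theta^{-1}\,\mathrm dP(\theta)$, since the contribution from $[0,\theta_n]$ is not controlled by $P([0,\theta_n])$ alone; the claim that ``every element of $\mathcal F_n$ is bounded by $\theta_n^{-1}$'' is therefore false as stated. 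Your overall route may be salvageable with a corrected sieve and a careful small-ball argument that preserves the pointwise lower bound under Dirichlet perturbation, but this is not what the paper does.
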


The proof is reported in Section \ref{app:proof}.

\subsection{Log-spline and log-linear priors on $ \lambda$} \label{sec:spline}
For simplicity of presentation, we set $T= 1$.  We consider a log-spline prior of order $q$ as in Section 4 of   \citet{ghosal:ghosh:vdv:00}. In other words, $\bar \lambda$ is parameterized as
$$\log \bar \lambda_{\theta} (\cdot) = \theta^t \underline B_{J}(\cdot) - c(\theta), \qquad \mbox{with }\,\exp\left( c(\theta) \right) = \int_0^1 e^{\theta^t \underline B_{J}(x)} \d x, $$
where $\underline B_J = (B_1,\, \ldots,\, B_J)$ is the $q$-th order $B$-spline defined in \citet{deboor:78} associated with $K$ fixed knots, so that $J = K+q-1$, see  \citet{ghosal:ghosh:vdv:00} for more details.
Consider a prior on $\theta$ in the form $J = J_n = \lfloor n^{1/(2\alpha+ 1)}\rfloor$, $\alpha \in [1/2,\, q]$ and, conditionally on $J$, the prior is absolutely continuous with respect to Lebesgue measure on $[-M,M]^J$ with density bounded from below and above by $c^J$ and $C^J$, respectively. Consider an absolutely continuous prior with positive and continuous density on $\R_+$ on $M_\lambda$. We then have the following posterior concentration result.
 \begin{cor} \label{spline}
  For the above prior, if $\|\log \lambda_0 \|_\infty <\infty $ and $\lambda_0$ is H\"older with regularity $\alpha \in [1/2,\, q]$, then under condition \eqref{moment}, there exists a constant $J_1>0$ so that
$$ \E_{\lambda_0}^{(n)} [\pi(\lambda: \ \| \lambda - \lambda_0\|_{1} >  J_1 n^{-\alpha /(2 \alpha + 1)} \mid N)] = O(n^{-k/(4\alpha+ 2)} (\log n)^{k(k_{[2]}-2)/2} ).$$
\end{cor}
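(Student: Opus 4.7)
The plan is to apply Theorem~\ref{th:gene:aalen} with $v_n = n^{-\alpha/(2\alpha+1)}$, reducing the verification of its conditions to the standard log-spline density-estimation analysis of Section~4 of \citet{ghosal:ghosh:vdv:00}. Since $\pi_M$ already satisfies the assumptions of the theorem, all the work concerns $\pi_1$ on $\bar\lambda$. As sieve I take $\mathcal{F}_n = \{\bar\lambda_\theta : \theta \in [-M,M]^{J_n}\}$, which carries the entire mass of $\pi_1$, so that $\pi_1(\mathcal{F}_n^c) = 0$ and the sieve-mass clause in (i) is trivial.

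For the entropy bounds, the $B$-spline partition-of-unity identity $\sum_j B_j \equiv 1$ yields $\|\log\bar\lambda_\theta - \log\bar\lambda_{\theta'}\|_\infty \leq 2\|\theta-\theta'\|_\infty$ on $[-M,M]^{J_n}$, and hence $\|\bar\lambda_\theta - \bar\lambda_{\theta'}\|_1 \lesssim \|\theta-\theta'\|_\infty$; covering $[-M,M]^{J_n}$ in sup-norm then gives $\log D(\xi,\mathcal{F}_n,\|\cdot\|_1) \lesssim J_n\log(1/\xi)$. Since $J_n = o(n)$ for $\alpha \geq 1/2$, the global entropy bound of (i) holds for any fixed $\xi,\delta>0$, and the local bound in (ii), $\log D(\zeta jv_n, \bar S_{n,j}\cap\mathcal{F}_n, \|\cdot\|_1) \lesssim J_n\log n \asymp nv_n^2\log n$, is dominated by $\delta(j+1)^2 nv_n^2$ once $j$ is large enough.

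The heart of the proof is the prior-mass lower bound on $\bar B_{k,n}(\bar\lambda_0;v_n,H)$. Since $\lambda_0$ is H\"older-$\alpha$ and $\|\log\lambda_0\|_\infty<\infty$, the function $\log\bar\lambda_0$ is bounded and H\"older-$\alpha$, so standard $B$-spline approximation yields $\theta^*\in[-M,M]^{J_n}$ (enlarging $M$ if needed) with $\|\log\bar\lambda_0 - \log\bar\lambda_{\theta^*}\|_\infty \lesssim J_n^{-\alpha}\asymp v_n$. For $\theta$ in the sup-ball of radius $cv_n$ around $\theta^*$ ($c$ small), the Lipschitz bound gives $\|\log\bar\lambda_\theta - \log\bar\lambda_0\|_\infty \lesssim v_n$, which simultaneously controls the Hellinger distance ($h^2\lesssim v_n^2$), the moments $E_j \lesssim v_n^j \leq v_n^2$, the sup-norm $\|\bar\lambda_\theta\|_\infty$ and the ratio $\|\bar\lambda_0/\bar\lambda_\theta\|_\infty$ (all $O(1)$, so $1+\log\|\bar\lambda_0/\bar\lambda_\theta\|_\infty = O(1)$), placing $\bar\lambda_\theta$ in $\bar B_{k,n}(\bar\lambda_0;v_n,H)$ for a single constant $H$. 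Since the conditional density of $\theta$ given $J=J_n$ is at least $c^{J_n}$, the prior mass of this ball is at least $c^{J_n}(2cv_n)^{J_n} = \exp(-CJ_n\log(1/v_n)) \asymp \exp(-Cnv_n^2\log n)$, and the ratio condition in (ii) then follows by combining this lower bound with $\pi_1(\bar S_{n,j})\leq 1$.

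The main obstacle is precisely the logarithmic gap between this natural prior-mass lower bound $\exp(-Cnv_n^2\log n)$ and the target $\exp(-\kappa_0 nv_n^2)$ demanded by Theorem~\ref{th:gene:aalen}. As in the classical log-spline analysis, this gap is bridged by applying the theorem at the slightly inflated rate $\tilde v_n = v_n(\log n)^{1/2}$, for which the prior-mass condition is comfortably satisfied; one then invokes the remark following Theorem~\ref{th:gene:aalen} to weaken $\bar B_{k,n}$ to $\bar B_{2,n}$, at the price of the extra $(\log n)^{k(k_{[2]}-2)/2}$ factor appearing in the remainder of the stated conclusion. The claimed concentration rate and remainder of Corollary~\ref{spline} then follow, after the usual adjustment of the multiplicative constant $J_1$.
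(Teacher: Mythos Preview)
Your overall strategy---verify the hypotheses of Theorem~\ref{th:gene:aalen} for the log-spline prior by reducing to the analysis in Section~4 of \citet{ghosal:ghosh:vdv:00}---is exactly the paper's strategy, and your treatment of the sieve, the global entropy clause in~(i), and the spline approximation $\theta^*$ is fine. The problem lies in how you handle condition~(ii), and your proposed fix in the last paragraph does \emph{not} deliver the stated corollary.

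You bound $\pi_1(\bar S_{n,j})\leq 1$ and $\pi_1(\bar B_{k,n}(\bar\lambda_0;v_n,H))\geq \exp(-Cnv_n^2\log n)$ separately, obtaining the ratio bound $\exp(Cnv_n^2\log n)$; likewise your local-entropy bound is $J_n\log n\asymp nv_n^2\log n$. Both require $(j+1)^2\gtrsim (\log n)/\delta$, so the $J_0$ in~(ii) is forced to grow like $\sqrt{\log n}$, which pushes an extra $(\log n)^{1/2}$ into the rate. You then propose to absorb this by inflating to $\tilde v_n=v_n(\log n)^{1/2}$, but applying Theorem~\ref{th:gene:aalen} at $\tilde v_n$ yields concentration at $J_1\tilde v_n=J_1 n^{-\alpha/(2\alpha+1)}(\log n)^{1/2}$, strictly weaker than the $J_1 n^{-\alpha/(2\alpha+1)}$ asserted in Corollary~\ref{spline}. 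Invoking the Remark only adds the $(\log n)^{k(k_{[2]}-2)/2}$ factor to the \emph{remainder}; it does nothing to repair the rate.

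The paper avoids this log loss by controlling the ratio $\pi_1(\bar S_{n,j})/\pi_1(\bar B_{k,n})$ and the local entropy \emph{directly}, via the bi-Lipschitz equivalence $h(\bar\lambda_\theta,\bar\lambda_{\theta'})\asymp \|\bar\lambda_\theta-\bar\lambda_{\theta'}\|_1\asymp J_n^{-1/2}\|\theta-\theta'\|_2$ on $[-M,M]^{J_n}$ (Lemmas~4.2 and~4.4 of \citet{ghosal:ghosh:vdv:00}). Under this equivalence, $\bar S_{n,j}$ and $\bar B_{k,n}(\bar\lambda_0;v_n,H)$ correspond to $\ell_2$-balls in $\theta$ of radii proportional to $jv_n\sqrt{J_n}$ and $v_n\sqrt{J_n}$. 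The $v_n^{J_n}$ factors cancel in the volume ratio, so
\[
\frac{\pi_1(\bar S_{n,j})}{\pi_1(\bar B_{k,n}(\bar\lambda_0;v_n,H))}\leq (C'j)^{J_n}=\exp\!\big(J_n\log(C'j)\big)\leq \exp\!\big(\delta(j+1)^2 J_n\big)=\exp\!\big(\delta(j+1)^2 nv_n^2\big)
\]
for all $j\geq J_0(\delta)$ with $J_0$ a \emph{fixed} constant, and similarly $\log D(\zeta jv_n,\bar S_{n,j},\|\cdot\|_1)\lesssim J_n\log(C/\zeta)\leq \delta(j+1)^2 nv_n^2$ for $j\geq J_0$. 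This is what the paper means by ``Lemma~4.5 together with Theorem~4.5 of \citet{ghosal:ghosh:vdv:00} controls the entropy of $\bar S_{n,j}$ and its prior mass''. With (ii) verified at the clean rate $v_n$, Theorem~\ref{th:gene:aalen} (and, for the form of the remainder, the Remark) yields exactly the statement of Corollary~\ref{spline}.
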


\begin{proof}
Set $\epsilon_n = n^{-\alpha/(2\alpha+1)}$.
Using Lemma 4.1, there exists $\theta_0 \in \R^J$ such that $h(\bar \lambda_{\theta_0},\, \bar \lambda_{0}) \lesssim  \| \log \bar \lambda_{\theta_0} - \log \bar \lambda_{0}\|_\infty \lesssim J^{-\alpha}$, which combined with Lemma 4.4 leads to
$$\pi_1(\bar B_{k,n}(\bar \lambda_0;\,\epsilon_n,\,H)\geq e^{-C_1 n\epsilon_n^2}.$$
Lemma 4.5 together with Theorem 4.5 of \citet{ghosal:ghosh:vdv:00} controls the entropy of $\bar S_{n,j}$ and its prior mass for $j$ larger than some fixed constant $J_0$.
\end{proof}

With such families of priors, it is more interesting to work with non-normalized $\lambda_\theta$. We can write
$$\lambda_{A,\theta}(\cdot) = A\exp\left(  \theta^t \underline B_{J}(\cdot) \right), \quad A >0 ,$$
so that a prior on $\lambda$ is defined as a prior on $A$, say $\pi_A$ absolutely continuous with respect to Lebesgue measure having positive and continuous density and the same type of prior  prior on $\theta$ as above.
The same result then holds. It is not a direct consequence of Theorem \ref{th:gene:aalen}, since $M_{\lambda_{A,\theta}} = A \exp( c(\theta) )$ is not \emph{a priori} independent of $\bar \lambda_{A,\theta}$. However, introducing $A$ allows to adapt Theorem \ref{th:gene:aalen} to this case. The practical advantage of the latter representation is that it avoids computing the normalizing constant $c(\theta)$.

In a similar manner, we can replace spline basis with other orthonormal bases, as considered in \citet{rivoirard:rousseau:12}, leading to the same posterior concentration rates as in density estimation. More precisely, consider intensities parameterized as
\begin{equation*}
\bar \lambda_\theta(\cdot)  = e^{\sum_{j=1}^J \theta_j \phi_j(\cdot) -c(\theta) }, \quad e^{c(\theta)} = \int_{\R^J} e^{\sum_{j=1}^J \theta_j \phi_j(x) }\d x,
\end{equation*}
where $(\phi_j)_{j=1}^\infty$ is an orthonormal basis of $\mathbb{L}_2([0,\,1])$, with $\phi_1= 1$. Write $\eta = (A, \,\theta)$, with $A>0$, and
$$\lambda_{\eta }(\cdot) =A  e^{\sum_{j=1}^J \theta_j \phi_j(\cdot)} = Ae^{c(\theta)} \bar \lambda_\theta(\cdot) .$$
Let $A\sim \pi_A$ and consider the same family of priors as in \citet{rivoirard:rousseau:12}:
\begin{equation*}
\begin{split}
J &\sim \pi_J, \\
j^{\beta} \theta_j/\tau_0 &\overset{\textrm{ind}}{\sim} g, \,\,  j \leq J , \quad \mbox{ and }\quad \theta_j =0, \quad \forall\, j >J,
\end{split}
\end{equation*}
where $g$ is a positive and continuous density on $\R$ and there exist $s\geq 0$ and $p>0$ such that
$$ \log \pi_J(J) \asymp - J (\log J)^{s} , \qquad \log g(x) \asymp-|x|^p, \quad s=0,\,1,$$
when $J$ and $|x|$ are large. \citet{rivoirard:rousseau:12} prove that this prior leads to minimax adaptive  posterior concentration rates over collections  of positive and H\"older classes of densities in the density model. Their proof easily extends to prove assumptions $(i)$ and $(ii)$ of Theorem \ref{th:gene:aalen}.
\begin{cor}\label{cor:loglin}
Consider the above described prior on an intensity function $\lambda$ on $[0,\,1]$. Assume that $\lambda_0$ is positive and belongs to a Sobolev class with smoothness $\alpha> 1/2$. Under condition \eqref{moment}, if $\beta< 1/2+\alpha$, there exists a constant $J_1>$ so that
 \[\begin{split}&\E_{\lambda_0}^{(n)} [\pi(\lambda: \ \| \lambda - \lambda_0\|_{1} >  J_1 (n/\log n)^{-\alpha /(2 \alpha + 1)}(\log n)^{(1-s)/2} \mid N)]\\&\hspace*{7cm} = O(n^{-k/(4\alpha+ 2)} (\log n)^{k(k_{[2]}-2)/2} ).
 \end{split}\]
\end{cor}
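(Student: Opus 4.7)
The plan is to invoke Theorem~\ref{th:gene:aalen}, in the variant discussed right after Corollary~\ref{spline} that accommodates the slight coupling between $M_\lambda$ and $\bar\lambda$, with target rate $v_n=(n/\log n)^{-\alpha/(2\alpha+1)}(\log n)^{(1-s)/2}$. Writing $\lambda_\eta=Ae^{c(\theta)}\bar\lambda_\theta=M_\lambda\bar\lambda_\theta$, the parameter $M_\lambda=Ae^{c(\theta)}$ is not a priori independent of $\bar\lambda_\theta$, but conditionally on $\theta$ it inherits from $\pi_A$ an absolutely continuous positive density on $\R_+$; as in the log-spline case, this is enough for the factorisation step of the proof of Theorem~\ref{th:gene:aalen} to go through. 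The bulk of the work is thus to verify conditions $(i)$ and $(ii)$ for the induced prior $\pi_1$ on $\bar\lambda_\theta$, and, since this is exactly the log-linear density prior analysed by \citet{rivoirard:rousseau:12}, this amounts to transporting their density-estimation bounds. Because $\|\log\bar\lambda_\theta\|_\infty$ will be uniformly bounded on the sieves I build, I intend to verify the weaker $\bar B_{2,n}$ form of the prior-mass condition and then invoke the remark following Theorem~\ref{th:gene:aalen} to recover the claimed $(\log n)^{k(k_{[2]}-2)/2}$ factor.

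For the prior-mass lower bound, I would truncate $\log\bar\lambda_0$ in the basis $(\phi_j)$ at level $J_n\asymp (n/\log n)^{1/(2\alpha+1)}$ to obtain $\theta^\ast\in\R^{J_n}$ with $\|\log\bar\lambda_{\theta^\ast}-\log\bar\lambda_0\|_\infty=O(J_n^{1/2-\alpha})$, which tends to zero because $\alpha>1/2$; the constraint $\beta<1/2+\alpha$ guarantees compatibility between this truncation error and the polynomial scaling $j^\beta$ in the prior on the coordinates. A second-order expansion of $\theta\mapsto\log\bar\lambda_\theta$ around $\theta^\ast$ then controls $h^2(\bar\lambda_\theta,\bar\lambda_0)$, $\|\bar\lambda_0/\bar\lambda_\theta\|_\infty$ and the low-order moments $E_j(\bar\lambda_0;\bar\lambda_\theta)$ uniformly for $\theta$ in an $\ell^2$-ball of radius $\asymp v_n$ around $\theta^\ast$. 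Multiplying $\pi_J(J_n)$ by the product prior mass of this ball reproduces the Rivoirard--Rousseau lower bound $\pi_1(\bar B_{2,n}(\bar\lambda_0;v_n,H))\geq e^{-C_1 n v_n^2}$.

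For the sieve and the entropy conditions, I would take $\mathcal F_n=\{\bar\lambda_\theta:J\leq J_n^\ast,\,\max_{j\leq J}|j^\beta\theta_j/\tau_0|\leq R_n\}$, with $J_n^\ast\asymp n v_n^2/(\log n)^s$ and $R_n$ a suitable polynomial in $n$. The tails of $\pi_J$ (essentially $e^{-J(\log J)^s}$) and of $g$ (stretched-exponential of index $p$) then give the complement-mass bound required by condition $(i)$, while a Lipschitz estimate on $\theta\mapsto\bar\lambda_\theta$ in $\|\cdot\|_1$ yields $\log D(\xi,\mathcal F_n,\|\cdot\|_1)\leq C J_n^\ast\log(R_n/\xi)$, which stays below $n\delta$ for every fixed $\delta>0$. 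Condition $(ii)$ follows by combining the same Lipschitz bound on $\bar S_{n,j}\cap\mathcal F_n$ with the prior-mass-ratio estimate of \citet{rivoirard:rousseau:12} on shells of $\|\cdot\|_1$-radius of order $jv_n$; the exponential dependence on $(j+1)^2$ in condition $(ii)$ is handled exactly as in their density-model proof.

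The main obstacle is the bookkeeping that matches the extra $(\log n)^{(1-s)/2}$ factor in $v_n$ with the tail exponent $s\in\{0,1\}$ of $\pi_J$, so that the prior-mass lower bound of Step~2 and the sieve complement bound of Step~3 close simultaneously against the constant $\kappa_0$ of \eqref{kap0}; choosing $J_n^\ast$ just large enough and $R_n$ polynomial is a delicate but routine calibration. Beyond that, no genuinely new idea is needed: every estimate is either taken from the log-spline discussion preceding this corollary or is a direct translation of the density-estimation proof of \citet{rivoirard:rousseau:12}.
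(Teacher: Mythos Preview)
Your proposal is correct and follows exactly the route the paper takes: the paper's ``proof'' of this corollary is in fact just the sentence preceding its statement, namely that the density-estimation arguments of \citet{rivoirard:rousseau:12} ``easily extend to prove assumptions $(i)$ and $(ii)$ of Theorem~\ref{th:gene:aalen}'', combined with the adaptation discussed after Corollary~\ref{spline} to handle the lack of prior independence between $M_\lambda=Ae^{c(\theta)}$ and $\bar\lambda_\theta$. Your plan simply spells out this programme in detail---the truncation at $J_n$, the sieve $\mathcal F_n$ with $J\leq J_n^\ast$ and bounded rescaled coordinates, the use of the $\bar B_{2,n}$ remark to absorb the $(\log n)^{k(k_{[2]}-2)/2}$ factor---so there is nothing to correct.
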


Note that  the constraint $\beta < \alpha + 1/2$ is satisfied for all $\alpha >1/2$ as soon as $\beta < 1$ and, as in \citet{rivoirard:rousseau:12}, the prior leads to adaptive minimax posterior concentration rates over collections of Sobolev balls.

\section{Proofs}\label{app:proof}

To prove Theorem \ref{th:gene:aalen}, we use the following intermediate results whose proofs are postponed
to Section \ref{app:Aalen}.
The first one controls the Kullback-Leibler divergence and absolute moments of $\ell_n(\lambda_0)-\ell_n(\lambda)$, where $\ell_n(\lambda)$ is the log-likelihood for Aalen processes evaluated at $\lambda$, whose expression is given by
\begin{equation*}\label{loglik:aalen}
\ell_n(\lambda)=\int_0^T\log(\lambda(t))\d N_t-\int_0^T\lambda(t)Y_t\d t,
\end{equation*}
see \cite{ABGK}.


\begin{prop} \label{prop:KL:Aalen}
Let $v_n$ be a positive  sequence such that $v_n\rightarrow0$ and $nv_n^2 \rightarrow \infty$.
For any $k\geq 2$ and $ H>0$, define the set
$$B_{k,n}(\lambda_0;\, v_n, \, H) =
\{ \lambda:  \ \bar \lambda \in \bar B_{k,n}(\bar \lambda_0;\, v_n,\,H), \,\,\, |M_\lambda - M_{\lambda_0}| \leq v_n\}.$$
Under assumptions \eqref{ass:Y1} and \eqref{moment}, for all $\lambda \in B_{k,n}(\lambda_0;\,v_n,\,H) $, we have
\begin{equation*}
\mathrm{KL}(\lambda_0;\,\lambda)\leq \kappa_0n v_n^2 \quad\mbox{ and }\quad V_k(\lambda_0;\, \lambda)\leq \kappa( n v_n^2)^{k/2},
\end{equation*}
where $\kappa_0,\, \kappa$ depend only on $k$, $C_{1k}$, $H$, $\lambda_0$, $m_1$ and $m_2$. An expression of $\kappa_0$ is given in \eqref{kap0}.
\end{prop}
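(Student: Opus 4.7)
The strategy is to write $\ell_n(\lambda_0)-\ell_n(\lambda)$ as its $\P_{\lambda_0}^{(n)}$-mean $\mathrm{KL}(\lambda_0;\lambda)$ plus two zero-mean fluctuation terms, and to bound each of them using the constraints built into $B_{k,n}(\lambda_0;v_n,H)$.

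First I would exploit the Doob--Meyer decomposition: since $(N_t-\int_0^t\lambda_0(s)Y_s\,\d s)_t$ is a zero-mean martingale under $\P_{\lambda_0}^{(n)}$, taking expectations of
\[
\ell_n(\lambda_0)-\ell_n(\lambda)=\int_0^T \log\tfrac{\lambda_0(t)}{\lambda(t)}\,\d N_t-\int_0^T(\lambda_0(t)-\lambda(t))\,Y_t\,\d t
\]
yields the identity
\[
\mathrm{KL}(\lambda_0;\lambda)=\int_{\Omega}\mu_n(t)\,\phi(\lambda_0(t),\lambda(t))\,\d t, \qquad \phi(a,b):=a\log(a/b)-a+b.
\]
Since $\mu_n(t)\le nm_2$ on $\Omega$ by \eqref{ass:Y1}, bounding $\mathrm{KL}$ reduces to bounding the pointwise integral of $\phi(\lambda_0,\lambda)$. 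Writing $\lambda=M_\lambda\bar\lambda$ and $\lambda_0=M_{\lambda_0}\bar\lambda_0$, this integral splits into (a) a \emph{shape} term proportional to the density KL $\mathrm{KL}(\bar\lambda_0;\bar\lambda)$, which by the classical Hellinger-controlled KL inequality is at most a constant multiple of $h^2(\bar\lambda_0,\bar\lambda)(1+\log\|\bar\lambda_0/\bar\lambda\|_\infty)\le v_n^2$ on $\bar B_{k,n}(\bar\lambda_0;v_n,H)$; and (b) a \emph{mass} term, which is $O(v_n^2)$ after Taylor-expanding the logarithm at $M_{\lambda_0}/M_\lambda$ and using $|M_\lambda-M_{\lambda_0}|\le v_n$. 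Keeping $M_{\lambda_0}$, $m_1$, $m_2$ separate throughout the resulting arithmetic is what produces the explicit constant $\kappa_0$ displayed in \eqref{kap0}.

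For the moment bound, I would decompose the centered log-likelihood ratio as $A_n+B_n$, with
\[
A_n=\int_0^T\log\tfrac{\lambda_0(t)}{\lambda(t)}\,\d\bigl(N_t-\lambda_0(t)Y_t\,\d t\bigr),\qquad B_n=\int_0^T\phi(\lambda_0,\lambda)\bigl(Y_t-\mu_n(t)\bigr)\,\d t.
\]
The term $A_n$ is a purely discontinuous martingale, and I would estimate $\E_{\lambda_0}^{(n)}[|A_n|^k]$ by a Burkholder--Davis--Gundy plus Rosenthal argument for counting-process stochastic integrals: the predictable quadratic variation is $\int_0^T(\log\lambda_0/\lambda)^2\lambda_0 Y_t\,\d t$, whose expectation is $O(nm_2 M_{\lambda_0}E_2(\bar\lambda_0;\bar\lambda))=O(nv_n^2)$, while the contribution of the higher-order jump moments is absorbed using the sup-norm bound $\|\log\bar\lambda_0/\bar\lambda\|_\infty\le H\log n$ together with the controls $E_j(\bar\lambda_0;\bar\lambda)\le v_n^2$ valid for $2\le j\le k_{[2]}$; since $k\le k_{[2]}$, this yields $\E_{\lambda_0}^{(n)}[|A_n|^k]\lesssim(nv_n^2)^{k/2}$. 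For $B_n$, Cauchy--Schwarz in $t$ gives
\[
|B_n|^k\le\Bigl(\int_0^T\phi(\lambda_0,\lambda)^2\,\d t\Bigr)^{k/2}\Bigl(\int_0^T(Y_t-\mu_n(t))^2\,\d t\Bigr)^{k/2},
\]
after which the second factor is bounded in expectation by a multiple of $n^{k/2}$ via \eqref{moment} and Jensen, while the first factor is $O(v_n^k)$ by a Taylor expansion of $\phi$ around $\lambda_0=\lambda$ combined with the sup-norm constraints $\|\bar\lambda\|_\infty\le H$, $\|\bar\lambda_0/\bar\lambda\|_\infty\le n^H$, and $|M_\lambda-M_{\lambda_0}|\le v_n$.

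The main obstacle will be the precise accounting of constants: the set $\bar B_{k,n}$ mixes an $L^\infty$ constraint that may grow like $n^H$ with $L^2$-type moment controls of order $v_n^2$, and the argument must be arranged so that the $L^\infty$ factor appears only inside the logarithmic Wong--Shen weight and in the sub-dominant jump terms of the BDG estimate; otherwise one pays unwanted powers of $n^H$ and loses the announced order $(nv_n^2)^{k/2}$. The second delicate point is isolating the exact constant $\kappa_0$ of \eqref{kap0}, which forces one to keep $M_{\lambda_0}$, $m_1$ and $m_2$ separate throughout the estimation of both the shape and the mass contributions, rather than collapsing them into a single undetermined constant.
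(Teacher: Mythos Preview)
Your proposal is correct and follows essentially the same architecture as the paper: the same centered decomposition $\ell_n(\lambda_0)-\ell_n(\lambda)-\mathrm{KL}=A_n+B_n$ into a martingale stochastic integral and a $(Y_t-\mu_n(t))$-fluctuation integral, BDG for $A_n$, and Cauchy--Schwarz together with assumption \eqref{moment} for $B_n$.

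Two minor differences are worth flagging. First, for the KL bound you work directly with $\bar\lambda_0,\bar\lambda$ after the crude bound $\mu_n\le nm_2$, whereas the paper introduces the $\mu_n$-tilted probability densities $\bar\lambda_{0,n},\bar\lambda_n$ and uses the exact identity $\mathrm{KL}(\lambda_0;\lambda)=M_n(\lambda_0)\bigl[\mathrm{KL}(\bar\lambda_{0,n};\bar\lambda_n)+\phi(M_n(\lambda)/M_n(\lambda_0))\bigr]$; this detour is what injects $m_1$ into the constant and yields the particular expression \eqref{kap0}. Your route is actually cleaner and would produce a constant free of $m_1$, hence no larger than the paper's $\kappa_0$, so the stated inequality still holds. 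Second, for the martingale term the paper does not invoke a one-shot Rosenthal inequality but \emph{iterates} BDG: after one application one faces $\E\bigl[|\int\log^2(\lambda_0/\lambda)\,\d N_t|^{k/2}\bigr]$, which is again split into a martingale part and a compensator, and BDG is reapplied until the outer exponent drops below $2$; it is precisely this recursion that forces the controls on $E_{2^j}(\bar\lambda_0;\bar\lambda)$ for $j$ up to $\log_2 k_{[2]}$. Your sketch gestures at this via the phrase ``higher-order jump moments'' and correctly identifies that the $E_j$, $2\le j\le k_{[2]}$, are what absorbs them, so the gap is only one of presentation.
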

The second result establishes the existence of tests that are used to control the numerator of posterior distributions.
We use that, under assumption \eqref{ass:Y1}, on the set $\Gamma_n$,
\begin{equation}\label{ass:Y3}
\forall\, t\in \Omega,\quad (1-\alpha)\tilde\mu_n(t) \leq  \frac{Y_t}{n} \leq (1+\alpha) \tilde\mu_n(t).
\end{equation}
\begin{prop} \label{prop:test:Aalen}
Assume that conditions $(i)$ and $(ii)$ of Theorem \ref{th:gene:aalen} are satisfied. For any $j\in\mathbb{N}$, define
$$S_{n,j}(v_n) = \{ \lambda: \ \bar\lambda\in{\mathcal F}_n\,\mbox{ and }\, jv_n < \|\lambda -\lambda_0\|_1 \leq (j+1)v_n \}.$$
Then, under assumption \eqref{ass:Y1},
there are constants $J_0 ,\, \rho,\, c>0$ such that, for every integer $j\geq J_0$, there exists a  test
$\phi_{n,j}$
so that, for a positive constant $C$,
\[
  \begin{array}{lll}
    \E_{\lambda_0}^{(n)}[\mathbf{1}_{\Gamma_n} \phi_{n,j}] \leq Ce^{ - c n j^2 v_n^2 }, \,
\sup\limits_{\lambda \in S_{n,j}(v_n) }\E_{\lambda}[\mathbf{1}_{\Gamma_n} (1- \phi_{n,j}) ] \leq Ce^{ - c n j^2 v_n^2 }, & \hbox{$J_0 \leq j \leq \dfrac{\rho}{v_n}$,} \\[7pt]
\mbox{and}\\[7pt]
    \E_{\lambda_0}^{(n)}[\mathbf{1}_{\Gamma_n} \phi_{n,j}] \leq Ce^{ - c n j v_n }, \,\,\,\,\,\,\,
\sup\limits_{\lambda \in S_{n,j}(v_n) }\E_{\lambda}[\mathbf{1}_{\Gamma_n} (1- \phi_{n,j}) ] \leq Ce^{ - c n j v_n }, & \hbox{$\qquad\, j > \dfrac{\rho}{v_n}$.}
  \end{array}\]
\end{prop}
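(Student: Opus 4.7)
The plan is to construct $\phi_{n,j}$ as the maximum of pointwise tests indexed by a finite $\L_1$-covering of $S_{n,j}(v_n)\cap\{\lambda:\bar\lambda\in\mathcal F_n\}$ and to control all errors on $\Gamma_n$ by a Bernstein inequality for compensated counting-process martingales. For each alternative $\lambda_1\in S_{n,j}(v_n)$, set $g_{\lambda_1}(t)=\mathrm{sign}(\lambda_1(t)-\lambda_0(t))\indic_{\Omega}(t)$ and
$$T_{\lambda_1}=\int_0^T g_{\lambda_1}(t)\bigl(\d N_t-\lambda_0(t)Y_t\,\d t\bigr).$$
Under $\P_{\lambda_0}$, the process $T_{\lambda_1}$ is a martingale with $|\Delta T_{\lambda_1}|\le 1$ whose predictable quadratic variation $\int_\Omega\lambda_0Y_t\,\d t$ is bounded by $(1+\alpha)m_2M_{\lambda_0}n$ on $\Gamma_n$ by \eqref{ass:Y3}. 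Under $\P_\lambda$ the drift of $T_{\lambda_1}$ equals $\int g_{\lambda_1}(\lambda-\lambda_0)Y_t\,\d t$; for $\lambda=\lambda_1$ this drift is at least $(1-\alpha)m_1n\|\lambda_1-\lambda_0\|_1\ge(1-\alpha)m_1njv_n$ on $\Gamma_n$. Taking $\phi_{\lambda_1}=\indic\{T_{\lambda_1}>\tfrac12(1-\alpha)m_1njv_n\}$ and applying the Freedman/van de Geer martingale inequality yields
$$\P_{\lambda_0}\bigl(\phi_{\lambda_1}=1,\Gamma_n\bigr)\le\exp\!\Bigl(-\frac{c_1(njv_n)^2}{n+njv_n}\Bigr),$$
whose denominator transitions from $n$ to $njv_n$ at $jv_n\asymp 1$, producing the two advertised regimes: $e^{-cnj^2v_n^2}$ when $jv_n\le\rho$ and $e^{-cnjv_n}$ when $jv_n>\rho$.

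To turn this into a uniform test, I cover $S_{n,j}(v_n)\cap\{\lambda:\bar\lambda\in\mathcal F_n\}$ in $\L_1$-norm by balls of radius $\zeta jv_n$ at centres $\lambda_1^{(1)},\dots,\lambda_1^{(N_j)}$, choosing $\zeta$ small enough (proportional to $m_1/m_2$) that for any $\lambda$ in the ball around $\lambda_1^{(i)}$ the drift of $T_{\lambda_1^{(i)}}$ under $\P_\lambda$ still exceeds $\tfrac34(1-\alpha)m_1njv_n$ on $\Gamma_n$; this follows from $\int|\lambda-\lambda_1^{(i)}|Y_t\,\d t\le(1+\alpha)m_2n\zeta jv_n$. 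The covering number $N_j$ is controlled separately in the two regimes. For $J_0\le j\le\rho/v_n$, I use condition $(ii)$: since $\|\lambda-\lambda_0\|_1\le(j+1)v_n$ forces both $|M_\lambda-M_{\lambda_0}|\le(j+1)v_n$ and $\|\bar\lambda-\bar\lambda_0\|_1\lesssim(j+1)v_n/M_{\lambda_0}$, combining the covering of $\bar S_{n,j'}\cap\mathcal F_n$ with $j'\lesssim j$ with a one-dimensional covering of the mass interval of length $O(jv_n)$ gives $\log N_j\le\delta'(j+1)^2nv_n^2$. For $j>\rho/v_n$, I fall back on the cruder bound $\log D(\xi,\mathcal F_n,\|\cdot\|_1)\le n\delta$ from condition $(i)$, which yields $\log N_j\le n\delta\le\tfrac12cnjv_n$ provided $\delta<c\rho/2$. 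Setting $\phi_{n,j}=\max_i\phi_{\lambda_1^{(i)}}$ and taking a union bound absorbs $N_j$ into the Bernstein exponent; the Type II bound follows identically because on each ball the drift still dominates $\tfrac34$ of the threshold.

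The main technical hurdle is the application of Bernstein's inequality on $\Gamma_n$: the predictable quadratic variation $\langle T_{\lambda_1}\rangle_T$ is random, so the inequality must be used in its random-variance form on the event $\{\langle T_{\lambda_1}\rangle_T\le V\}$ with the deterministic upper bound $V=(1+\alpha)m_2M_{\lambda_0}n$ obtained by intersecting with $\Gamma_n$ via \eqref{ass:Y3}. A secondary delicate point is the clean translation of the $\bar\lambda$-indexed entropy of condition $(ii)$ into a $\lambda$-indexed entropy for $S_{n,j}(v_n)$: this relies on $M_\lambda$ lying in an interval of length $O(jv_n)$ around $M_{\lambda_0}$, so that the contribution of the mass coordinate is only $O(\log(1/\zeta))$ and is absorbed into the constant $\delta'$ without harming the exponent $(j+1)^2nv_n^2$.
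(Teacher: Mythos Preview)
Your argument is essentially the paper's: the paper likewise builds $\phi_{n,j}$ as the maximum of local tests over an $\L_1$-covering of $S_{n,j}(v_n)$ at scale proportional to $jv_n$, bounds each local error by the same Bernstein-type inequality for $\int H_t(\d N_t-\d\Lambda_t)$ on $\Gamma_n$ (citing Theorem~3 of \citet{HRR}), and controls the covering number via assumption~$(ii)$ when $(j{+}1)v_n\lesssim M_{\lambda_0}$ and via assumption~$(i)$ otherwise. The only cosmetic difference is that the paper's local test uses $H=\1_A$ (or $\1_{A^c}$) on the set where $\lambda_1-\lambda_0$ has the dominant sign, whereas you use the full $H=\mathrm{sign}(\lambda_1-\lambda_0)$; both choices give the same drift/variance balance and the same two-regime exponent.
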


\smallskip

In what follows, the symbols \vir{$\lesssim$} and \vir{$\gtrsim$}
are used to denote inequalities valid up to constants that are universal or
fixed throughout.

\begin{proof}[Proof of Theorem \ref{th:gene:aalen}]
Given Proposition~\ref{prop:KL:Aalen} and Proposition \ref{prop:test:Aalen},
the proof of Theorem \ref{th:gene:aalen} is similar to that of Theorem 1 in \citet{ghosal:vdv:07}.
Let $U_n = \{\lambda: \ \| \lambda - \lambda_0\|_1 > J_1v_n \}$.
Write
\begin{equation*}
\pi(U_n\mid N) = \dfrac{ \int_{U_n} e^{\ell_n(\lambda) - \ell_n(\lambda_0) }\d\pi(\lambda) }{ \int_{\mathcal F} e^{\ell_n(\lambda) - \ell_n(\lambda_0) }\d\pi(\lambda) }=\dfrac{N_n}{D_n}.
\end{equation*}
We have
\[
 \begin{split}
 &
 \P_{\lambda_0}^{(n)}\left( D_n \leq e^{ - (\kappa_0+1) n v_n^2 }\pi_1( \bar B_{k,n}(\bar \lambda_0;  v_n,\,H) ) \right) \\
& \quad\qquad\leq \P_{\lambda_0}^{(n)}\left(\int_{B_{k,n}(\lambda_0;\,v_n,\, H)}\frac{ \exp\{\ell_n(\lambda) - \ell_n(\lambda_0)\}}{\pi(B_{k,n}(\lambda_0;\,
v_n,\,H) ) }\d\pi(\lambda)\right. \\&\hspace*{5cm}\left.\leq - (\kappa_0+1) n v_n^2+\log\left(\frac{\pi_1( \bar B_{k,n}(\bar \lambda_0;\, v_n,\,H))}{\pi(B_{k,n}(\lambda_0;\,  v_n,\,H))}\right)\right).
\end{split}
\]
By the assumption on the positivity and continuity of the Lebesgue density
of the prior $\pi_M$ and the requirement that $v_n^2 \geq (n/\log n)^{-1}$,
 $$ \pi(B_{k,n}(\lambda_0;\, v_n,\, H))\gtrsim \pi_1( \bar B_{k,n}(\bar \lambda_0;\,v_n,\,H))v_n
\gtrsim \pi_1( \bar B_{k,n}(\bar \lambda_0;\, v_n,\,H) )e^{-nv_n^2/2},  $$ so that,
using Proposition \ref{prop:KL:Aalen} and Markov's inequality,
 $$  \P_{\lambda_0}^{(n)}\left( D_n \leq e^{ - (\kappa_0 +1) n  v_n^2 }\pi_1( \bar B_{k,n}(\bar \lambda_0;\, v_n,\, H) ) \right) \lesssim (nv_n^2)^{-k/2}.$$
Note that inequality \eqref{norm:mino} implies that $\pi(S_{n,j}(v_n))\leq \pi_1(\bar S_{n,j})$.
Using tests $\phi_{n,j}$ of Proposition \ref{prop:test:Aalen}, mimicking the proof of Theorem~1 of \cite{ghosal:vdv:07},
we have that for  $J_1 \geq J_0$,
 \begin{equation*}
 \begin{split}
   &\hspace*{-0.5cm}\E_{\lambda_0}^{(n)} \left[ \1_{\Gamma_n} \pi\left( \lambda : \ \|\lambda -\lambda_0\|_1> J_1 v_n \mid N\right) \right]\\
   & \quad\hspace*{0.8cm} \leq \sum_{j\geq J_1 }  \E_{\lambda_0}^{(n)} [\1_{\Gamma_n}\phi_{n,j} ]+ \sum_{j = \lceil J_1\rceil}^{\lfloor \rho/v_n\rfloor }  e^{  (\kappa_0 +1) n v_n^2 } \frac{ \pi_1(\bar S_{n,j}) e^{ - c n j^2 v_n^2 } }{ \pi_1( \bar B_{k,n}(\bar \lambda_0;\,v_n,\,H) )  } \\
 &\quad\hspace*{2.8cm}  + \sum_{j > \rho/v_n  }   \frac{ e^{  (\kappa_0 +1) n v_n^2 }\pi_1(\bar S_{n,j} ) e^{ - c n j v_n } }{ \pi_1( \bar B_{k,n}(\bar \lambda_0;\,v_n,\,H) )  } + \frac{e^{  (\kappa_0 +1) n  v_n^2 }  \pi_1(\mathcal F_n^c )  }{ \pi_1( \bar B_{k,n}(\bar \lambda_0;\,v_n,\,H) )  }\\
  & \quad \hspace*{2.8cm} + \P_{\lambda_0}^{(n)}( D_n \leq e^{ - (\kappa_0 +1) n v_n^2 }\pi_1( \bar B_{k,n}(\bar \lambda_0;\,v_n,\,H) ) )\\
  & \quad\hspace*{0.8cm} \lesssim (nv_n^2)^{-k/2},
 \end{split}
 \end{equation*}
which proves the result since $\P_{\lambda_0}^{(n)}(\Gamma_n^c)=o(1)$.
\end{proof}

\medskip

\begin{proof}[Proof of Corollary \ref{cor:aalen}] 
Without loss of generality, we can assume that $\Omega=[0,\,T]$.
At several places, using \eqref{informally} and \eqref{ass:Y3}, we have that, under $\P_\lambda^{(n)}(\cdot\mid \Gamma_n)$, for any interval $I$, the number of points of $N$ falling in $I$ is controlled by the number of points of a Poisson process with intensity $n(1+\alpha)m_2\lambda$ falling in $I$.
Recall that $\bar\epsilon_n = (n/\log n)^{-1/3} $.
For $\kappa_0$ as in \eqref{kap0}, we control $\P_{\lambda_0}^{(n)}( \ell_n(\lambda) - \ell_n( \lambda_0) \leq - (\kappa_0+1)n \bar \epsilon_n^2 ) $. We follow most of the computations of \citet{salomond:13}. Let $e_n =  (n \bar\epsilon_n^2)^{-k/2}$,
$$\bar \lambda_{0n} (t) =  \frac{ \lambda_0(t)  \1_{ t \leq \theta_n} }{ \int_0^{\theta_n} \lambda_0(u) \d u },  \quad \mbox{ with }\,\, \theta_n = \inf \left\{ \theta:\, \int_{0}^\theta \bar \lambda_0(t) \d t \geq 1 - \frac{e_n  }{n } \right\},$$ and $\lambda_{0n} = M_{\lambda_0}\bar \lambda_{0n}$.
Define the event $A_n = \{X\in N: \, X \leq \theta_n \}$. We make use of the following result. Let $\tilde N$ be a Poisson process with intensity $n(1+\alpha)m_2\lambda_0$. If $\tilde N(T)=k$, denote by $\tilde N=\{X_1,\,\ldots,\,X_k\}$. Conditionally on $\tilde N(T)=k$, the random variables
$X_1,\,\ldots,\,X_k$ are i.i.d. with density $\bar\lambda_0$. So,
\[
\begin{split}
\P_{\lambda_0}^{(n)}(A_n^c\mid \Gamma_n)&\leq \sum_{k=1}^{\infty}\P_{\lambda_0}^{(n)}(\exists\; X_i>\theta_n\mid \tilde N(T)=k)\,\P_{\lambda_0}^{(n)}(\tilde N(T)=k)\\
&\leq \sum_{k=1}^{\infty}\left(1-\left(1-\frac{e_n}{n}\right)^k\right)\P_{\lambda_0}^{(n)}(\tilde N(T)=k)\\
&= O\left(\frac{e_n}{n}\E_{\lambda_0}^{(n)}[\tilde N(T)]\right)=O(e_n)= O((n \bar\epsilon_n^2)^{-k/2}).
\end{split}
\]
Now,
\[\begin{split}
&\P_{\lambda_0}^{(n)}\left( \ell_n(\lambda) - \ell_n( \lambda_0) \leq - (\kappa_0+2)n \bar \epsilon_n^2\mid \Gamma_n \right)\\
&\hspace*{2cm}\leq \P_{\lambda_0}^{(n)}\left( \ell_n(\lambda) - \ell_n( \lambda_0) \leq - (\kappa_0+2)n\bar  \epsilon_n^2\mid A_n,\,\Gamma_n \right)+ \P_{\lambda_0}^{(n)}(A_n^c\mid \Gamma_n).
\end{split}\]
We now deal with the first term on the right-hand side. On $\Gamma_n\cap A_n$,
\begin{equation*}
\begin{split}
\ell_n (\lambda_0) &= \ell_n( \lambda_{0n}) + \int_0^{\theta_n}\log\left(\frac{\lambda_0(t)}{\lambda_{0n}(t)}\right)\d N_t-\int_0^T[\lambda_0(t)-\lambda_{0n}(t)]Y_t\d t
\\
&=\ell_n( \lambda_{0n}) +N(T) \log \left( \int_0^{\theta_n} \bar \lambda_0(t) \d t \right)  - M_{\lambda_0} \int_0^T \bar \lambda_0(t)Y_t  \d t + M_{\lambda_0}\frac{ \int_0^{\theta_n} \bar \lambda_0(t)Y_t  \d t}{ \int_0^{\theta_n} \bar \lambda_0(t) \d t }  \\
&\leq\ell_n( \lambda_{0n}) + M_{\lambda_0}\frac{ \int_{\theta_n}^T  \bar \lambda_0(t) \d t \int_0^{\theta_n} \bar \lambda_0(t)Y_t  \d t}{ \int_0^{\theta_n} \bar \lambda_0(t) \d t}
 -  M_{\lambda_0} \int_{\theta_n}^T \bar \lambda_0(t)Y_t  \d t \\
 &\leq \ell_n( \lambda_{0n}) + M_{\lambda_0} \frac{e_n (1+\alpha)m_2 }{ 1 - e_n/n }.
\end{split}
\end{equation*}
So, for every $\lambda$ and any $n$ large enough,
\[\begin{split}
&\P_{\lambda_0}^{(n)}\left( \ell_n(\lambda) - \ell_n( \lambda_0) \leq - (\kappa_0+2)n \bar \epsilon_n^2\mid A_n,\,\Gamma_n \right)\\
&\hspace*{4.5cm}\leq
\P_{\lambda_0}^{(n)}\left( \ell_n(\lambda) - \ell_n( \lambda_{0n}) \leq - (\kappa_0+1)n \bar  \epsilon_n^2\mid A_n,\,\Gamma_n \right)\\
&\hspace*{4.5cm}=\P_{\lambda_{0n}}^{(n)} \left( \ell_n(\lambda) - \ell_n( \lambda_{0n})  \leq - (\kappa_0+1)n \bar \epsilon_n^2\mid \Gamma_n \right)
\end{split}\]
because $\P_{\lambda_0}^{(n)}(\cdot\mid A_n)=\P_{\lambda_{0n}}^{(n)}(\cdot).$
Let $H>0$ be fixed. For all $\lambda\in B_{k,n}(\lambda_{0n};\,\bar\epsilon_n,\,H)$, using Proposition \ref{prop:KL:Aalen}, we obtain
\begin{equation*} \label{likeratio:aalen}
\P_{\lambda_{0n}}^{(n)} \left( \ell_n(\lambda) - \ell_n(\lambda_{0n}) \leq -(\kappa_0+1)n \bar \epsilon_n^2\mid \Gamma_n\right) = O((n\bar\epsilon_n^2)^{-k/2}).
\end{equation*}
Mimicking the proof of Lemma 8 in \citet{salomond:13}, we have that, for some constant $C_k >0$,
\begin{equation*}\label{salomond:lem8}
  \pi_1 \left( \bar B_{k,n}(\bar \lambda_{0n}; \,\bar \epsilon_n,\,H)\right) \geq e^{ - C_k n \bar \epsilon_n^2 }\quad\mbox{ when $n$ is large enough,}
 \end{equation*}
so that the first part of condition $(ii)$ of Theorem \ref{th:gene:aalen} is verified.
As in \citet{salomond:13}, we set $\mathcal F_n = \{ \bar \lambda: \ \bar \lambda (0 ) \leq M_n\}$, with
$M_n = \exp ( c_1 n \bar\epsilon_n^2 )$ and $c_1$ a positive constant. From Lemma 9 of \citet{salomond:13}, there exists $a>0$ such that
$  \pi_1(\mathcal F_n^c) \leq e^{ - c_1 (a+1) n\bar\epsilon_n^2 } $ for $n$ large enough, and the first part of  condition $(i)$ is satisfied.
It is known from \citet{groeneboom:85} that the $\epsilon$-entropy of $\mathcal F_n$ is of order $(\log M_n)/\epsilon$, that is $o(n)$ for all $\epsilon>0$ and the second part of $(i)$ holds.
The second part of $(ii)$ is a consequence of equation (22) of  \citet{salomond:13}.
\end{proof}



\section{Auxiliary results}\label{app:Aalen}
This section reports the proofs of Proposition \ref{prop:KL:Aalen} and Proposition
\ref{prop:test:Aalen} that have been stated in Section \ref{app:proof}.
Proofs of intermediate results are deferred to Section \ref{app}.

We use the fact that for any pair of densities $f$ and $g$,
$\|f-g\|_1\leq 2h(f,\,g).$

\begin{proof}[Proof of Proposition \ref{prop:KL:Aalen}]
Recall that the log-likelihood evaluated at $\lambda$ is given by
$\ell_n(\lambda)=\int_0^T\log(\lambda(t))\d N_t-\int_0^T\lambda(t)Y_t\d t$.  Since on $\Omega^c$, $N$ is empty and $Y_t\equiv0$ almost surely, we can assume, without loss of generality, that $\Omega=[0,\,T]$.
Define $$M_n(\lambda) = \int_0^T \lambda(t) \mu_n(t) \d t,\quad\mbox{} \quad M_n(\lambda_0) = \int_0^T \lambda_0(t) \mu_n(t) \d t,$$
and
$$\bar \lambda_n(\cdot) = \frac{\lambda(\cdot) \mu_n(\cdot) }{M_n(\lambda)}= \frac{ \bar \lambda(\cdot) \tilde\mu_n(\cdot)}{ \int_0^T \bar \lambda(t) \tilde\mu_n(t)\d t}, \quad \mbox{} \quad  \bar \lambda_{0,n}(\cdot) = \frac{ \lambda_0(\cdot) \mu_n(\cdot) }{M_n(\lambda_0) } = \frac{ \bar \lambda_0(\cdot)\tilde\mu_n(\cdot)}{ \int_0^T \bar \lambda_0(t)\tilde \mu_n(t)\d t}.$$
By straightforward computations,
\begin{equation}\label{compil1}
\begin{split}
\textrm{KL}(\lambda_0;\,\lambda)&=\E_{\lambda_0}^{(n)}[\ell_n(\lambda_0)-\ell_n(\lambda)]\\
&=M_{n}(\lambda_0)\pq{\textrm{KL}
(\bar{\lambda}_{0,n};\,\bar{\lambda}_{n})+\frac{M_n(\lambda)}{M_{n}(\lambda_0)} -1-\log\left(\frac{M_n(\lambda)}{M_{n}(\lambda_0)}\right)}\\
&=M_{n}(\lambda_0)\pq{\textrm{KL}(\bar{\lambda}_{0,n};\,\bar{\lambda}_{n})+\phi\left(\frac{M_n(\lambda)}{M_{n}(\lambda_0)}\right)}\\
&\leq nm_2M_{\lambda_0}\pq{\textrm{KL}(\bar{\lambda}_{0,n};\,\bar{\lambda}_{n})+\phi\left(\frac{M_n(\lambda)}{M_{n}(\lambda_0)}\right)},
\end{split}
\end{equation}
where $\phi(x) = x- 1 - \log x$ and
$$\textrm{KL}(\bar{\lambda}_{0,n};\,\bar{\lambda}_{n})=
\int_0^T\log\left(\frac{\bar\lambda_{0,n}(t)}{\bar\lambda_{n}(t)}\right)\bar\lambda_{0,n}(t)\d t.$$
We control $\textrm{KL}(\bar{\lambda}_{0,n};\,\bar{\lambda}_{n})$ for $\lambda \in B_{k,n}(\lambda_0;\,v_n,\,H)$. By using Lemma~8.2 of \cite{ghosal:ghosh:vdv:00}, we have
\begin{eqnarray}\begin{split}\label{compil2}
\textrm{KL}(\bar{\lambda}_{0,n};\,\bar{\lambda}_{n})&\leq 2h^2(\bar{\lambda}_{0,n},\,\bar{\lambda}_{n})\left(1+\log\left\|\frac{\bar{\lambda}_{0,n}}{\bar{\lambda}_{n}}\right\|_\infty\right)\\
&\leq2h^2(\bar{\lambda}_{0,n},\,\bar{\lambda}_{n})
\pq{1+\log\left(\frac{m_2}{m_1}\right)+\log\left\|\frac{\bar{\lambda}_0}{\bar{\lambda}}\right\|_\infty}\\
&\leq2\pq{1+\log\left(\frac{m_2}{m_1}\right)}h^2(\bar{\lambda}_{0,n},\,\bar{\lambda}_{n})
\left(1+\log\left\|\frac{\bar{\lambda}_0}{\bar{\lambda}}\right\|_\infty\right)
\end{split}\end{eqnarray}
because $1+\log(m_2/m_1)\geq 1$. We now deal with $h^2(\bar{\lambda}_{0,n},\,\bar{\lambda}_{n})$. We have
\begin{eqnarray*}
\begin{split}
h^2(\bar{\lambda}_{0,n},\,\bar{\lambda}_{n})&=\int_0^T\left(\sqrt{\bar \lambda_{0,n}(t)}-\sqrt{ \bar \lambda_{n}(t)}\right)^2\d t\\
&=\int_0^T\left(\sqrt{\frac{\bar \lambda_0(t)\tilde\mu_n(t)}{\int_0^T\bar\lambda_0(u)\tilde\mu_n(u)\d u}}-\sqrt{\frac{\bar \lambda(t)\tilde\mu_n(t)}{\int_0^T\bar\lambda(u)\tilde\mu_n(u)\d u}}\right)^2\d t\\
&\leq2m_2\int_0^T\left(\sqrt{\frac{\bar \lambda_0(t)}{\int_0^T\bar\lambda_0(u)\tilde\mu_n(u)\d u}}-\sqrt{\frac{\bar \lambda_0(t)}{\int_0^T\bar\lambda(u)\tilde\mu_n(u)\d u}}\right)^2\d t\\&\qquad \qquad\qquad +2m_2\int_0^T\left(\sqrt{\frac{\bar \lambda_0(t)}{\int_0^T\bar\lambda(u)\tilde\mu_n(u)\d u}}-\sqrt{\frac{\bar \lambda(t)}{\int_0^T\bar\lambda(u)\tilde\mu_n(u)\d u}}\right)^2\d t\\
&\leq2m_2U_n+\frac{2m_2}{m_1}h^2(\bar\lambda_0,\,\bar\lambda),
\end{split}
\end{eqnarray*}
with $$U_n=\left(\sqrt{\frac{1}{\int_0^T\bar\lambda_0(t)\tilde\mu_n(t)\d t}}-\sqrt{\frac{1}{\int_0^T\bar\lambda(t)\tilde\mu_n(t)\d t}}\right)^2.$$
We denote by
$$\tilde\epsilon_n:=\frac{1}{\int_0^T\bar\lambda_0(u)\tilde\mu_n(u)\d u}\int_0^T[\bar\lambda(t)-\bar\lambda_0(t)]\tilde\mu_n(t)\d t,$$
so that
$$|\tilde\epsilon_n|\leq\frac{1}{m_1}\int_0^T|\bar\lambda(t)-\bar\lambda_0(t)|\tilde\mu_n(t)\d t\leq\frac{2m_2}{m_1}h(\bar\lambda_0,\,\bar\lambda).$$
Then,
$$U_n=\frac{1}{\int_0^T\bar\lambda_0(t)\tilde\mu_n(t)\d t}\left(1-\frac{1}{\sqrt{1+\tilde\epsilon_n}}\right)^2\leq \frac{\tilde\epsilon_n^2}{4m_1}\leq \frac{m_2^2}{m_1^3}h^2(\bar\lambda_0,\,\bar\lambda).$$
Finally,
\begin{equation}\label{compil3}
h^2(\bar{\lambda}_{0,n},\,\bar{\lambda}_{n})\leq \frac{2m_2}{m_1}\left(\frac{m_2^2}{m_1^2}+1\right)h^2(\bar\lambda_0,\,\bar\lambda).
\end{equation}
It remains to bound $\phi\left(M_n(\lambda)/M_{n}(\lambda_0)\right)$. We have
\begin{equation*}
\begin{split}
|M_n(\lambda_0)-M_n(\lambda)|&\leq\int_0^T|\lambda(t)-\lambda_0(t)|\mu_n(t)\d t\\
&  \leq nm_2\int_0^T|\lambda(t)-\lambda_0(t)|\d t\\
&\leq\frac{m_2}{m_1M_{\lambda_0}}M_n(\lambda_0)\left[M_{\lambda_0}\|\bar\lambda-\bar\lambda_0\|_1+|M_\lambda-M_{\lambda_0}|\right]\\
&\leq\frac{m_2}{m_1M_{\lambda_0}}M_n(\lambda_0)[2M_{\lambda_0}h(\bar\lambda,\,\bar\lambda_0)+|M_\lambda-M_{\lambda_0}|]\\
&\leq\frac{m_2}{m_1M_{\lambda_0}}M_n(\lambda_0)(2M_{\lambda_0}+1)v_n.
\end{split}
\end{equation*}
Since $\phi(u+1)\leq u^2$ if $|u|\leq{1}/{2}$, we have
\begin{equation}\label{compil4}
\phi\left(\frac{M_n(\lambda)}{M_{n}(\lambda_0)}\right)\leq \frac{m_2^2}{m_1^2M_{\lambda_0}^2}(2M_{\lambda_0}+1)^2v_n^2 \quad\mbox{for $n$ large enough.}
\end{equation}
Combining \eqref{compil1}, \eqref{compil2}, \eqref{compil3} and \eqref{compil4}, we have $\textrm{KL}(\lambda_0;\,\lambda)\leq \kappa_0 nv_n^2$  for $n$ large enough, with $\kappa_0$ as in \eqref{kap0}.
We now deal with
$$ V_{2k}(\lambda_0;\,\lambda) = \E_{\lambda_0}^{(n)}[|\ell_n(\lambda_0)-\ell_n(\lambda)-\E_{\lambda_0}^{(n)}[\ell_n(\lambda_0)-\ell_n(\lambda)]|^{2k}],\quad\mbox{$k\geq1$}.$$
We begin by considering the case $k>1$.
In the sequel, we denote by $C$ a constant that may change from line to line. Straightforward computations lead to
\begin{eqnarray*}
\begin{split}
V_{2k}(\lambda_0;\, \lambda)&=\E_{\lambda_0}^{(n)}\left[\left|-\int_0^T\pq{\lambda_0(t)-\lambda(t)-\lambda_0(t)\log\left(\frac{\lambda_0(t)}{\lambda(t)}\right)}[ Y_t-\mu_n(t)]\d t \right. \right.\\
& \hspace{5cm} \left. \left.+\int_0^T\log\left(\frac{\lambda_0(t)}{\lambda(t)}\right)[\d N_t-Y_t\lambda_0(t)\d t]\right|^{2k}\right]\\
&\leq 2^{2k-1}(A_{2k}+B_{2k}),
\end{split}
\end{eqnarray*}
with
$$B_{2k}:=\E_{\lambda_0}^{(n)}\left[ \left|\int_0^T\log\left(\frac{\lambda_0(t)}{\lambda(t)}\right)[\d N_t-Y_t\lambda_0(t)\d t]\right|^{2k}\right]$$
and, by \eqref{moment},
\begin{eqnarray*}
\begin{split}
A_{2k}&:=\E_{\lambda_0}^{(n)}\left[\left|\int_0^T\pq{\lambda_0(t)-\lambda(t)-\lambda_0(t)\log\left(\frac{\lambda_0(t)}{\lambda(t)}\right)}[Y_t-\mu_n(t)]\d t \right|^{2k} \right]\\
&\leq\left(\int_0^T \pq{\lambda_0(t)-\lambda(t)-\lambda_0(t)\log\left(\frac{\lambda_0(t)}{\lambda(t)}\right)}^2 \d t \right)^k\times
\E_{\lambda_0}^{(n)}\left[ \left(  \int_0^T [Y_t-\mu_n(t)]^2 \d t\right)^k  \right] \\
&\leq 2^{2k-1}C_{1k}n^k\left(A_{2k,1}+A_{2k,2}\right),
\end{split}
\end{eqnarray*}
where, for $\lambda\in B_{k,n}(\lambda_0;\,v_n,\,H)$,
\begin{eqnarray*}
\begin{split}
A_{2k,1}&:=\pq{\int_0^T \lambda_0^2(t)\log^2\left(\frac{\lambda_0(t)}{\lambda(t)}\right) \d t }^k\\
&\leq M_{\lambda_0}^{2k}\|\bar\lambda_0\|_\infty^k\pq{\int_0^T\bar \lambda_0(t)\log^2\left(\frac{M_{\lambda_0}\bar\lambda_0(t)}{M_{\lambda}\bar\lambda(t)}\right) \d t}^k\\
&\leq 2^{2k-1} M_{\lambda_0}^{2k}\|\bar\lambda_0\|_\infty^k\pq{E_2^k(\bar\lambda_0;\,\bar\lambda)+\left|\log\left(\frac{M_\lambda}{M_{\lambda_0}}\right)\right|^{2k}}\\
&\leq C\pq{E_2^k(\bar\lambda_0;\,\bar\lambda)+\left|M_\lambda-M_{\lambda_0}\right|^{2k}}
\leq Cv_n^{2k}
\end{split}
\end{eqnarray*}
and
\begin{eqnarray*}
\begin{split}
A_{2k,2}&:=\left(\int_0^T [\lambda_0(t)-\lambda(t)]^2 \d t \right)^k\\
&=\left(\int_0^T \pg{(M_{\lambda_0}-M_\lambda)\bar\lambda_0(t)-M_\lambda[\bar\lambda(t)-\bar\lambda_0(t)]}^2 \d t \right)^k\\
&\leq 2^{2k-1}\|\bar\lambda_0\|_\infty^{2k}(M_{\lambda_0}-M_\lambda)^{2k}\\
&\qquad\qquad\qquad\quad +2^{2k-1}M_{\lambda}^{2k}
\pq{\int_0^T \left(\sqrt{\bar\lambda_0(t)}-\sqrt{\bar\lambda(t)}\right)^2 \left(\sqrt{\bar\lambda_0(t)}+\sqrt{\bar\lambda(t)}\right)^2 \d t}^k\\
&\leq 2^{2k-1}\|\bar\lambda_0\|_\infty^{2k}(M_{\lambda_0}-M_\lambda)^{2k}+2^kM_{\lambda}^{2k}(\|\bar\lambda_0\|_\infty+\|\bar\lambda\|_\infty)^kh^{2k}(\bar\lambda_0,\,\bar\lambda)
\leq Cv_n^{2k}.
\end{split}
\end{eqnarray*}
Therefore,
$$A_{2k}\leq C(nv_n^2)^k.$$
To deal with $B_{2k}$, for any $T>0$, we set
$$M_T:=\int_0^T\log\left(\frac{\lambda_0(t)}{\lambda(t)}\right)[\d N_t-Y_t\lambda_0(t)\d t],$$
so $(M_T)_T$ is a martingale. Using the Burkholder-Davis-Gundy Inequality (see Theorem B.15 in \cite{Karr}), there exists a constant $C(k)$ only depending on $k$ such that, since $2k> 1$,
$$\E_{\lambda_0}^{(n)}[|M_T|^{2k}]\leq C(k)\E_{\lambda_0}^{(n)}\left[ \left|\int_0^T\log^2\left(\frac{\lambda_0(t)}{\lambda(t)}\right)\d N_t\right|^k\right].$$
Therefore, for $k>1$,
\begin{eqnarray*}
\begin{split}
B_{2k}&= \E_{\lambda_0}^{(n)}[|M_T|^{2k}]\\
&\leq 3^{k-1}C(k)\left(\E_{\lambda_0}^{(n)}\left[ \left|\int_0^T\log^2\left(\frac{\lambda_0(t)}{\lambda(t)}\right)[\d N_t-Y_t\lambda_0(t)\d t]\right|^k\right.\right.\\
&\hspace{6cm}+\left|\int_0^T\log^2\left(\frac{\lambda_0(t)}{\lambda(t)}\right)[Y_t-\mu_n(t)]\lambda_0(t)\d t\right|^k\\
&\hspace{6cm}+\left.\left.\left|\int_0^T\log^2\left(\frac{\lambda_0(t)}{\lambda(t)}\right)\mu_n(t)\lambda_0(t)\d t\right|^k\right]\right)\\
&= 3^{k-1}C(k)(B_{k,2}^{(0)}+B_{k,2}^{(1)}+B_{k,2}^{(2)}),
\end{split}
\end{eqnarray*}
with
\begin{eqnarray*}
\begin{split}
B_{k,2}^{(0)}&=\E_{\lambda_0}^{(n)}\left[\left|\int_0^T\log^2\left(\frac{\lambda_0(t)}{\lambda(t)}\right)[\d N_t-Y_t\lambda_0(t)\d t]\right|^k\right],\\
B_{k,2}^{(1)}&=\E_{\lambda_0}^{(n)}\left[\left|\int_0^T\log^2\left(\frac{\lambda_0(t)}{\lambda(t)}\right)[Y_t-\mu_n(t)]\lambda_0(t)\d t\right|^k\right],\\
B_{k,2}^{(2)}&=\left|\int_0^T\log^2\left(\frac{\lambda_0(t)}{\lambda(t)}\right)\mu_n(t)\lambda_0(t)\d t\right|^k.
\end{split}
\end{eqnarray*}
This can be iterated: we set $J=\min\{j\in\mathbb{N}: \ 2^j\geq k\}$ so that  $1<k2^{1-J}\leq 2$. There exists a constant $C_k$, only depending on $k$, such that for
$$B_{k2^{1-j},2^j}^{(1)}=\E_{\lambda_0}^{(n)}\left[\left|\int_0^T\log^{2^j}\left(\frac{\lambda_0(t)}{\lambda(t)}\right)[Y_t-\mu_n(t)]\lambda_0(t)\d t\right|^{k2^{1-j}}\right]$$
and
$$B_{k2^{1-j},2^j}^{(2)}=\left|\int_0^T\log^{2^j}\left(\frac{\lambda_0(t)}{\lambda(t)}\right)\mu_n(t)\lambda_0(t)\d t\right|^{k2^{1-j}},$$
\begin{eqnarray*}
\begin{split}
B_{2k}&\leq C_k\left(\E_{\lambda_0}^{(n)}\left[\left|\int_0^T\log^{2^J}\left(\frac{\lambda_0(t)}{\lambda(t)}\right)[\d N_t-Y_t\lambda_0(t) \d t]\right|^{k2^{1-J}}\right]\right.\\
&\left.\hspace*{8cm}+\sum_{j=1}^J(B_{k2^{1-j},2^j}^{(1)}+B_{k2^{1-j},2^j}^{(2)}) \right)\\
&\leq C_k\left\{\left(\E_{\lambda_0}^{(n)}\left[\left|\int_0^T\log^{2^J}\left(\frac{\lambda_0(t)}{\lambda(t)}\right)[\d N_t-Y_t\lambda_0(t)\d t]\right|^2\right]\right)^{k2^{-J}}\right.\\
&\left.\hspace*{8cm}+\sum_{j=1}^J(B_{k2^{1-j},2^j}^{(1)}+B_{k2^{1-j},2^j}^{(2)}) \right\}\\
&= C_k\left[\left(\E_{\lambda_0}^{(n)}\left[\int_0^T\log^{2^{J+1}}\left(\frac{\lambda_0(t)}{\lambda(t)}\right)Y_t\lambda_0(t)\d t\right]\right)^{k2^{-J}}+\sum_{j=1}^J(B_{k2^{1-j},2^j}^{(1)}+B_{k2^{1-j},2^j}^{(2)}) \right]\\
&= C_k\left[\left(\int_0^T\log^{2^{J+1}}\left(\frac{\lambda_0(t)}{\lambda(t)}\right)\mu_n(t)\lambda_0(t)\d t\right)^{k2^{-J}}+\sum_{j=1}^J(B_{k2^{1-j},2^j}^{(1)}+B_{k2^{1-j},2^j}^{(2)}) \right]\\
&= C_k\left[B_{k2^{-J},2^{J+1}}^{(2)}+\sum_{j=1}^J(B_{k2^{1-j},2^j}^{(1)}+B_{k2^{1-j},2^j}^{(2)}) \right].
\end{split}
\end{eqnarray*}
Note that, for any $1\leq j\leq J$,
\begin{eqnarray*}
\begin{split}
B_{k2^{1-j},2^j}^{(1)}&\leq \left[\int_0^T\log^{2^{j+1}}\left(\frac{\lambda_0(t)}{\lambda(t)}\right)\lambda_0^2(t)\d t\right]^{k2^{-j}}\times \E_{\lambda_0}^{(n)}\left[\left(\int_0^T[Y_t-\mu_n(t)]^2\d t\right)^{k2^{-j}}\right]\\
&\leq C(M_{\lambda_0}^2\|\bar\lambda_0\|_\infty)^{k2^{-j}}\left[\int_0^T\log^{2^{j+1}}\left(\frac{M_{\lambda_0}\bar\lambda_0(t)}{M_{\lambda}\bar\lambda(t)}\right)\bar\lambda_0(t)\d t\right]^{k2^{-j}}\times n^{k2^{-j}}\\
&\leq C\left[\log^{2^{j+1}}\left(\frac{M_{\lambda_0}}{M_\lambda}\right)+E_{2^{j+1}}(\bar\lambda_0;\,\bar\lambda)\right]^{k2^{-j}}\times n^{k2^{-j}}\\
&\leq C(nv_n^2)^{k2^{-j}}\leq C(nv_n^2)^k,
\end{split}
\end{eqnarray*}
where we have used \eqref{moment}. Similarly,  for any $j\geq 1$,
\begin{eqnarray*}
\begin{split}
B_{k2^{1-j},2^j}^{(2)}&\leq (nm_2M_{\lambda_0})^{k2^{1-j}}
\left[\int_0^T\log^{2^j}\left(\frac{M_{\lambda_0}\bar\lambda_0(t)}{M_{\lambda}\bar\lambda(t)}\right)\bar\lambda_0(t)\d t\right]^{k2^{1-j}}\\
&\leq C\left[\log^{2^j}\left(\frac{M_{\lambda_0}}{M_\lambda}\right)+E_{2^j}(\bar\lambda_0;\,\bar\lambda)\right]^{k2^{1-j}} \times n^{k2^{1-j}}
\leq C(nv_n^2)^{k2^{1-j}} \leq C(nv_n^2)^k.
\end{split}
\end{eqnarray*}
Therefore, for any $k>1$,
$$V_{2k}(\lambda_0;\, \lambda)\leq \kappa(nv_n^2)^k,$$
where $\kappa$ depends on $C_{1k}$, $k$, $H$, $\lambda_0$, $m_1$ and $m_2$. Using previous computations, the case $k=1$ is straightforward.
So, we obtain the result for $V_{k}(\lambda_0;\, \lambda)$ for every $k\geq 2$.
\end{proof}

To prove Proposition \ref{prop:test:Aalen}, we use the following lemma whose proof is reported in Section \ref{app}.

\begin{lem} \label{lem:tests}
Under condition \eqref{ass:Y1}, there exist constants $\xi,\,K>0$, only depending on $M_{\lambda_0}$, $\alpha,$ $m_1$ and $m_2$, such that, for any non-negative function $\lambda_1$,
there exists a test $\phi_{\lambda_1}$ so that
$$\E_{\lambda_0}^{(n)}[\1_{\Gamma_n}\phi_{\lambda_1}]\leq 2\exp\left(- K n\| \lambda_1 - \lambda_0 \|_{1} \times\min\{\| \lambda_1 - \lambda_0 \|_1,\,m_1\} \right)$$
and
$$\sup_{\lambda: \ \| \lambda - \lambda_1 \|_{1} <\xi\| \lambda_1 - \lambda_0 \|_1 } \E_{\lambda}[\1_{\Gamma_n}(1-\phi_{\lambda_1})]\leq 2\exp\left(-Kn\| \lambda_1 - \lambda_0 \|_1\times\min\{\| \lambda_1 - \lambda_0 \|_{1},\,m_1\} \right).$$
\end{lem}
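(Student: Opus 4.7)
The natural strategy is to build a single linear Neyman--Pearson-style test that separates $\lambda_0$ from every $\lambda$ in an $L_1$-ball around $\lambda_1$. Setting $\phi(t) := \mathrm{sign}(\lambda_1(t) - \lambda_0(t))\mathbf{1}_{\Omega}(t)$, which is bounded by $1$ and satisfies $\int_0^T \phi(\lambda_1 - \lambda_0) = \|\lambda_1 - \lambda_0\|_1$, I would form the centered statistic
\[ Z_n := \frac{1}{n}\int_0^T \phi(t)\bigl(\mathrm{d} N_t - Y_t\lambda_0(t)\,\mathrm{d} t\bigr) \]
and take $\phi_{\lambda_1} := \mathbf{1}\{Z_n > \tau_n\}$ with $\tau_n := \tfrac{(1-\alpha)m_1}{2}\|\lambda_1-\lambda_0\|_1$. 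For any intensity $\lambda$, $M^{(\lambda)}_t := N_t - \int_0^t Y_s\lambda(s)\,\mathrm{d} s$ is a zero-mean $(\mathcal G_t)$-martingale, which gives the drift decomposition
\[ nZ_n = \int_0^T \phi\,\mathrm{d} M^{(\lambda)} + \int_0^T \phi(t) Y_t(\lambda(t) - \lambda_0(t))\,\mathrm{d} t. \]
Using $\phi(\lambda_1 - \lambda_0) = |\lambda_1 - \lambda_0|$ together with \eqref{ass:Y3}, the drift on $\Gamma_n$ is at least $(1-\alpha)m_1 n\|\lambda_1-\lambda_0\|_1 - (1+\alpha)m_2 n\|\lambda - \lambda_1\|_1$, which, for $\xi < (1-\alpha)m_1/(4(1+\alpha)m_2)$, exceeds $\tfrac{3(1-\alpha)m_1}{4} n\|\lambda_1 - \lambda_0\|_1 = \tfrac{3}{2}n\tau_n$ uniformly over the prescribed ball.

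Both error probabilities then reduce to tail bounds for a counting-process martingale restricted to $\{V_T \leq L\}$, to which I would apply the joint-event form of Bernstein's inequality (of the type exploited in \cite{ABGK}). For Type~I, the predictable quadratic variation of $nZ_n$ under $\P_{\lambda_0}^{(n)}$ equals $\int \phi^2 Y_t \lambda_0\,\mathrm{d} t \leq (1+\alpha)m_2 n M_{\lambda_0}$ on $\Gamma_n$, while the jumps are bounded by $|\phi|\leq 1$, so Bernstein yields
\[ \E_{\lambda_0}^{(n)}[\mathbf{1}_{\Gamma_n}\phi_{\lambda_1}] \leq \exp\!\left(-\frac{n\tau_n^2/2}{(1+\alpha)m_2 M_{\lambda_0} + \tau_n/3}\right). \]
For Type~II, the same device is applied to $-\int \phi\,\mathrm{d} M^{(\lambda)}$ under $\P_\lambda^{(n)}$, whose predictable quadratic variation is at most $(1+\alpha)m_2 n(M_{\lambda_0} + (1+\xi)\|\lambda_1-\lambda_0\|_1)$ on $\Gamma_n$; since $\{1-\phi_{\lambda_1}\}\cap \Gamma_n \subseteq \{-\int \phi\,\mathrm{d} M^{(\lambda)} \geq \tfrac{(1-\alpha)m_1}{4}n\|\lambda_1 - \lambda_0\|_1\}$, an analogous bound is obtained.

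The main obstacle I anticipate is the bookkeeping that forces these Bernstein bounds into the stated form $\exp(-Kn\|\lambda_1-\lambda_0\|_1 \min\{\|\lambda_1-\lambda_0\|_1, m_1\})$ uniformly in $\lambda_1$. When $\|\lambda_1-\lambda_0\|_1$ is small, the variance term dominates the denominator and the quadratic regime of Bernstein produces the factor $\|\lambda_1-\lambda_0\|_1^2$; when it is large, both the variance (through $M_\lambda$) and the $\tau_n/3$ contribution in the denominator scale linearly in $\|\lambda_1 - \lambda_0\|_1$, so one has to track constants carefully to verify that the exponent is still at least of order $n m_1 \|\lambda_1-\lambda_0\|_1$, which is the content of the specific form of the $\min$. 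The other delicate point is that Bernstein must be used in its joint form $\{M_T \geq a\}\cap\{V_T \leq L\}$, since $V_T$ is random and only deterministically controlled on $\Gamma_n$; this is precisely what allows the indicator $\mathbf{1}_{\Gamma_n}$ to be absorbed into the probability without any conditioning step.
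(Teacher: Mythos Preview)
Your approach is correct and very close to the paper's. The paper uses exactly the same ingredients: a test statistic built from the sign of $\lambda_1-\lambda_0$, together with the joint-event Bernstein inequality for counting-process martingales (their display \eqref{concentration}, taken from \cite{HRR}), and a case split between the ``small'' regime $\|\lambda_1-\lambda_0\|_{\tilde\mu_n}\leq 2\tilde M_n(\lambda_0)$ (quadratic exponent) and the ``large'' regime (linear exponent) --- precisely the dichotomy you flag as the main bookkeeping issue.

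The one structural difference is that the paper does not use your single signed integrand $\phi=\1_A-\1_{A^c}$; instead it introduces $A=\{\lambda_1\geq\lambda_0\}$ and $A^c$, chooses the one-sided test based on $N(A)-\int_A\lambda_0 Y_t\,\d t$ or on $N(A^c)-\int_{A^c}\lambda_0 Y_t\,\d t$ according to which of $d_A(\lambda_1,\lambda_0)$, $d_{A^c}(\lambda_1,\lambda_0)$ is larger, and handles the Type~II error via an auxiliary lemma (Lemma~\ref{lem:tests:aalen:error2}). Your combined statistic $Z_n$ is literally the difference of these two pieces, so the analysis collapses the case distinction and the auxiliary lemma into one step; this is a genuine simplification, at the cost of no additional assumptions. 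Conversely, the paper's one-sided formulation makes the small/large regime split slightly more transparent because the threshold $\rho_n(u)$ is chosen differently in each regime, whereas you keep a fixed $\tau_n$ and let the Bernstein denominator do the work. Either route yields the stated bound with constants depending only on $M_{\lambda_0},\alpha,m_1,m_2$.
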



\begin{proof}[Proof of Proposition~\ref{prop:test:Aalen}]
We consider the setting of Lemma~\ref{lem:tests} and a covering of $S_{n,j}(v_n)$ with $\L_1$-balls of radius $\xi j v_n$ and centers $(\lambda_{l,j})_{l=1,\,\ldots,\, D_j }$, where $D_j$ is the covering number of $S_{n,j}(v_n)$ by such balls. We set $\phi_{n,j} = \max_{l=1,\,\ldots,\, D_j}\phi_{\lambda_{l,j}}$, where the $\phi_{\lambda_{l,j}}$'s are defined in Lemma~\ref{lem:tests}. So, there exists a constant $\rho>0$ such that
 $$ \E_{\lambda_0}^{(n)}[ \1_{\Gamma_n} \phi_{n,j} ] \leq 2D_j e^{-K n j^2v_n^2} \,\,\,\mbox{and} \,\,  \sup_{\lambda \in S_{n,j}(v_n)}
 \E_{\lambda}^{(n)}[ \1_{\Gamma_n} ( 1 -\phi_{n,j}) ] \leq 2e^{-K n j^2v_n^2 },\quad\mbox{if }  j \leq \frac{\rho}{v_n},$$
and
 $$ \E_{\lambda_0}^{(n)}[ \1_{\Gamma_n} \phi_{n,j} ] \leq 2D_j e^{-K n jv_n } \,\,\,\mbox{and} \,\,  \sup_{\lambda \in S_{n,j}(v_n)} \E_{\lambda}^{(n)}[ \1_{\Gamma_n} ( 1 -\phi_{n,j}) ] \leq 2e^{-K n jv_n},\quad\mbox{if }  j > \frac{\rho}{v_n},$$
where $K$ is a constant (see Lemma~\ref{lem:tests}). We now bound $D_j$.
First note that for any $\lambda=M_\lambda\bar \lambda$ and $\lambda' = M_{\lambda'}\bar \lambda'$,
\begin{equation}\label{llbar}
\|\lambda -\lambda'\|_1 \leq M_\lambda\|\bar \lambda - \bar \lambda'\|_1+ |M_\lambda-M_{\lambda'}|.
\end{equation}
Assume that $M_\lambda \geq M_{\lambda_0}$. Then,
\begin{equation*}
\begin{split}
 \|\lambda - \lambda_0\|_1 & \geq \int_{\bar \lambda >\bar \lambda_0} [M_\lambda \bar \lambda (t) - M_{\lambda_0} \bar \lambda_0 (t)]\d t \\
 &= M_\lambda\int_{\bar \lambda >\bar \lambda_0} [\bar \lambda (t) - \bar \lambda_0 (t)]\d t + (M_\lambda - M_{\lambda_0})
 \int_{\bar \lambda >\bar \lambda_0}\bar \lambda_0 (t)\d t\\
 &\geq  M_\lambda\int_{\bar \lambda >\bar \lambda_0} [\bar \lambda (t) - \bar \lambda_0 (t)] \d t = \frac{ M_\lambda}{ 2}\|\bar \lambda -\bar \lambda_0\|_1.
 \end{split}
 \end{equation*}
Conversely, if $M_\lambda < M_{\lambda_0}$,
\begin{equation*}
\begin{split}
 \|\lambda - \lambda_0\|_1 & \geq \int_{\bar \lambda_0 >\bar \lambda} [M_{\lambda_0} \bar \lambda_0 (t) - M_\lambda \bar \lambda (t)]\d t \\
 &\geq M_{\lambda_0}\int_{\bar \lambda_0 >\bar \lambda} [\bar \lambda_0 (t) -  \bar \lambda (t)] \d t =
 \frac{ M_{\lambda_0}}{ 2 }\|\bar \lambda -\bar \lambda_0\|_1.
 \end{split}
 \end{equation*}
 So,
$2  \| \lambda - \lambda_0\|_1 \geq  (M_\lambda \vee M_{\lambda_0})  \|\bar \lambda - \bar \lambda_0\|_1$
 and  we finally have
 \begin{equation} \label{norm:mino}
 \| \lambda - \lambda_0\|_1 \geq \max \left\{{(M_\lambda \vee M_{\lambda_0})  \|\bar \lambda - \bar \lambda_0\|_1 }/{ 2 } , \,| M_\lambda - M_{\lambda_0}|  \right\}.
 \end{equation}
So, for all $\lambda = M_\lambda \bar \lambda  \in S_{n,j}(v_n)$,
\begin{equation}\label{7.9}
\| \bar \lambda -\bar \lambda_0 \|_1 \leq \frac{2(j+1) v_n}{M_{\lambda_0}} \quad\mbox{and}\quad |M_\lambda-M_{\lambda_0}| \leq (j+1) v_n.
 \end{equation}
 Therefore, $S_{n,j}(v_n)\subseteq  (\bar S_{n,j} \cap \mathcal F_n)\times \{M:\, \ |M-M_{\lambda_0}| \leq  (j+1) v_n\}$ and any covering of $(\bar S_{n,j} \cap \mathcal F_n)\times \{M:\, \ |M-M_{\lambda_0}| \leq  (j+1) v_n\}$ will give a covering of $S_{n,j}(v_n)$. So, to bound $D_j$, we have to build a convenient covering of $( \bar S_{n,j} \cap \mathcal F_n)\times \{M:\, \ |M-M_{\lambda_0}| \leq  (j+1) v_n\}$. We distinguish two cases.
\begin{itemize}
\item We assume that $(j+1)v_n \leq 2M_{\lambda_0}$. Then, \eqref{7.9} implies that $M_\lambda \leq 3 M_{\lambda_0}$.
 Moreover, if
 $$\|\bar \lambda -\bar  \lambda'\|_1 \leq \frac{\xi j v_n}{3M_{\lambda_0}+1} \quad \mbox{ and } \quad |M_\lambda-M_{\lambda'}|\leq \frac{\xi j v_n}{3M_{\lambda_0}+1},$$
 then, by \eqref{llbar},
 $$\| \lambda - \lambda'\|_1\leq \frac{(M_\lambda +1) \xi j v_n}{3M_{\lambda_0}+1} \leq \xi j v_n.$$
By assumption $(ii)$ of Theorem \ref{th:gene:aalen}, this implies that, for any $\delta>0$, there exists $J_0$ such that for $j\geq J_0$,
  \begin{eqnarray*}
  \begin{split}
  D_j&\leq D( (3 M_{\lambda_0}+1)^{-1}  \xi j v_n,\, \bar S_{n,j} \cap \mathcal F_n,\, \| \cdot \|_1 ) \times \left[2(j+1)v_n\times\frac{(3 M_{\lambda_0}+1) }{ \xi jv_n}+\frac{1}{2}\right]\\
  &\lesssim \exp( \delta (j+1)^2 nv_n^2).
  \end{split}
  \end{eqnarray*}
\item We assume that $(j+1) v_n > 2 M_{\lambda_0}$.   If $$ \| \bar \lambda - \bar \lambda' \|_1 \leq \frac{\xi}{4} \quad  \mbox{ and } \quad |M_\lambda-M_{\lambda'}| \leq \frac{\xi (M_\lambda\vee M_{\lambda_0})}{ 4}, $$
   using again \eqref{llbar} and \eqref{7.9},
  $$
  \| \lambda - \lambda'\|_1\leq \frac{\xi M_\lambda }{4}+\frac{\xi (M_\lambda+M_{\lambda_0})}{4}
  \leq\frac{3\xi M_{\lambda_0}}{4}+\frac{\xi (j+1)v_n}{2}\leq \frac{7\xi (j+1)v_n}{8}\leq \xi jv_n,$$
for $n$ large enough. By assumption $(i)$ of Theorem \ref{th:gene:aalen}, this implies that, for any $\delta>0$,
  $$D_j\lesssim D( \xi /4,\, \mathcal F_n,\, \| \cdot \|_1 ) \times \log((j+1)v_n)\lesssim \log(jv_n)\exp (\delta n).$$
 \end{itemize}
It is enough to choose $\delta$ small enough to obtain the result of Proposition \ref{prop:test:Aalen}.
\end{proof}
\section[Appendix]{Appendix}\label{app}

\begin{proof}[Proof of Lemma \ref{lem:tests}]
For any $\lambda$, we denote by $\E^{(n)}_{\lambda,\Gamma_n}[\cdot]=\E^{(n)}_{\lambda}[\1_{\Gamma_n}\times\cdot]$. For any $\lambda,\,\lambda'$,
we define
$$ \|\lambda - \lambda'\|_{\tilde \mu_n}:=\int_\Omega |\lambda(t)-\lambda'(t)|\tilde \mu_n(t)\d t.$$
On $\Gamma_n$ we have
\begin{equation}\label{comp:norm}
m_1 \| \lambda - \lambda_0\|_1 \leq \| \lambda - \lambda_0\|_{\tilde \mu_n} \leq m_2 \| \lambda - \lambda_0\|_1.
\end{equation}
The main tool for building convenient tests is Theorem~3 of \cite{HRR} (and  its proof) applied in the univariate setting. By mimicking the proof of this theorem from Inequality (7.5) to Inequality (7.7), if $H$ is a deterministic function bounded by $b$, we have that, for any $u\geq 0$,
\begin{equation}\label{concentration}
\P_\lambda^{(n)}\left(\left|\int_0^T H_t(\d N_t-\d \Lambda_t)\right|\geq \sqrt{2v u } + \frac{bu}{3} \mbox{ and } \Gamma_n\right)\leq 2e^{-u},
\end{equation}
where we recall that  $ \Lambda_t = \int_0^t Y_s\lambda(s)\d s$ and $v $ is a deterministic constant such that, on $\Gamma_n$,
$\int_0^T H_t^2 Y_t\lambda(t)\d t \leq  v $ almost surely.
For any non-negative function $\lambda_1$, we define the sets
$$A:=\{t\in\Omega:\ \lambda_1(t)\geq \lambda_0(t)\}\quad \mbox{and} \quad A^c:=\{t\in\Omega:\ \lambda_1(t)< \lambda_0(t)\}$$ and the following pseudo-metrics
$$d_A(\lambda_1,\,\lambda_0):=\int_A[\lambda_1(t)-\lambda_0(t)]\tilde\mu_n(t)\d t \quad \mbox{and} \quad d_{A^c}(\lambda_1,\,\lambda_0):=\int_{A^c}[\lambda_0(t)-\lambda_1(t)]\tilde\mu_n(t)\d t.$$
Note that
$ \| \lambda_1 - \lambda_0 \|_{\tilde\mu_n}  = d_A(\lambda_1,\,\lambda_0) +  d_{A^c}(\lambda_1,\,\lambda_0).$
For $u>0$, if $d_A(\lambda_1, \lambda_0) \geq d_{A^c}(\lambda_1, \lambda_0)$, define the test
\[\phi_{\lambda_1,A}(u):=\1\left\{N(A)-\int_{A} \lambda_0(t)Y_t\d t \geq  \rho_n(u) \right\} ,\quad \mbox{ with  } \rho_n(u) := \sqrt{2n v(\lambda_0) u} + \frac{u}{3},\]
where, for any non-negative function $\lambda$,
\begin{equation}\label{vl}
v(\lambda) := (1+\alpha)\int_\Omega \lambda(t) \tilde\mu_n(t)\d t.
\end{equation}
Similarly, if
$d_A(\lambda_1,\, \lambda_0) < d_{A^c}(\lambda_1,\, \lambda_0)$, define
\[\phi_{\lambda_1,A^c}(u):=\1\left\{N(A^c)-\int_{A^c}\lambda_0(t)Y_t\d t \leq - \rho_n(u) \right\}. \]
Since for any non-negative function $\lambda$,  on $\Gamma_n$, by \eqref{ass:Y3},
\begin{equation}
\label{Yn:mun}
(1-\alpha) \int_\Omega \lambda(t) \tilde\mu_n(t)\d t\leq  \int_\Omega \lambda(t)\frac{Y_t}{n} \d t \leq (1+\alpha)\int_\Omega\lambda(t) \tilde\mu_n(t)\d t,
\end{equation}
inequality \eqref{concentration} applied with $H=\1_A$ or $H=\1_{A^c}$,  $b=1$ and $v=n v(\lambda_0)$ implies that, for any $u>0$,
\begin{equation}\label{err1}
\E^{(n)}_{\lambda_0,\Gamma_n}[\phi_{\lambda_1,A}(u) ]\leq 2e^{-u }\quad\mbox{and}\quad \E^{(n)}_{\lambda_0,\Gamma_n}[\phi_{\lambda_1,A^c}(u) ]\leq 2e^{-u }.
\end{equation}


We now state a useful lemma whose proof is given below.

\medskip

\begin{lem} \label{lem:tests:aalen:error2}
Assume condition \eqref{ass:Y1} is verified.
Let $\lambda$ be a non-negative function. Assume that $$\| \lambda - \lambda_1\|_{\tilde\mu_n} \leq \frac{1-\alpha}{ 4(1+\alpha)} \| \lambda_1 -\lambda_0\|_{\tilde\mu_n}.$$
We set $\tilde M_n(\lambda_0)=\int_\Omega\lambda_0(t)\tilde\mu_n(t)\d t$
 and we distinguish two cases.
\begin{enumerate}
\item Assume that $d_A(\lambda_1,\, \lambda_0) \geq d_{A^c}(\lambda_1,\, \lambda_0)$. Then,
$$\E^{(n)}_{\lambda,\Gamma_n}[1-\phi_{\lambda_1,A}(u_A)]\leq 2\exp(-u_A),$$
where
$$
u_A=\left\{\begin{array}{lc}
u_{0A}n d_A^2(\lambda_1,\,\lambda_0),&\mbox{ if }  \| \lambda_1 -\lambda_0\|_{\tilde\mu_n}\leq 2\tilde M_n(\lambda_0),\\[2pt]
u_{1A}nd_A(\lambda_1,\lambda_0),&\mbox{ if }  \| \lambda_1 -\lambda_0\|_{\tilde\mu_n}> 2\tilde M_n(\lambda_0),
\end{array}\right.
$$
and $u_{0A}$, $u_{1A}$ are two constants only depending on $\alpha$, $M_{\lambda_0}$, $m_1$ and $m_2$.
\item Assume that $d_A(\lambda_1,\, \lambda_0) < d_{A^c}(\lambda_1,\, \lambda_0)$. Then,
$$\E^{(n)}_{\lambda,\Gamma_n}[1-\phi_{\lambda_1,A^c}(u_{A^c})]\leq 2\exp(-u_{A^c}),$$
where
$$
u_{A^c}=\left\{\begin{array}{lc}
u_{0A^c}nd_{A^c}^2(\lambda_1,\,\lambda_0),&\mbox{ if }  \| \lambda_1 -\lambda_0\|_{\tilde\mu_n}\leq 2\tilde M_n(\lambda_0),\\[2pt]
u_{1A^c}nd_{A^c}(\lambda_1,\,\lambda_0),&\mbox{ if }  \| \lambda_1 -\lambda_0\|_{\tilde\mu_n}> 2\tilde M_n(\lambda_0),
\end{array}\right.
$$
and $u_{0A^c}$, $u_{1A^c}$ are two constants only depending on $\alpha$, $M_{\lambda_0}$, $m_1$ and $m_2$.
\end{enumerate}
\end{lem}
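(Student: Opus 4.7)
The plan is to treat the two cases symmetrically: I focus on case~1 (the bound for $\phi_{\lambda_1,A}$ when $d_A(\lambda_1,\lambda_0)\geq d_{A^c}(\lambda_1,\lambda_0)$); case~2 is obtained by swapping the roles of $A$ and $A^c$ and reversing the sign.

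Under $\P_\lambda^{(n)}$, $M^{A}_T:=N(A)-\int_A\lambda(t)Y_t\d t$ is a martingale, so I decompose
\[
N(A)-\int_A\lambda_0(t)Y_t\d t \;=\; M^{A}_T \;+\; \int_A[\lambda(t)-\lambda_0(t)]Y_t\d t.
\]
The first step is to lower bound the drift term on $\Gamma_n$. Writing $Y_t=n\tilde\mu_n(t)(1+\epsilon_t)$ with $|\epsilon_t|\leq\alpha$ (which follows from \eqref{ass:Y3}) and using the triangle inequality in $\|\cdot\|_{\tilde\mu_n}$,
\[
\int_A[\lambda(t)-\lambda_0(t)]\tilde\mu_n(t)\d t \;\geq\; d_A(\lambda_1,\lambda_0)-\|\lambda-\lambda_1\|_{\tilde\mu_n}.
\]
Combining the hypothesis $\|\lambda-\lambda_1\|_{\tilde\mu_n}\leq\tfrac{1-\alpha}{4(1+\alpha)}\|\lambda_1-\lambda_0\|_{\tilde\mu_n}$ with the case assumption $\|\lambda_1-\lambda_0\|_{\tilde\mu_n}\leq 2d_A(\lambda_1,\lambda_0)$, and controlling the $\epsilon_t$-error by $\alpha n\|\lambda-\lambda_0\|_{\tilde\mu_n}$, a short calculation gives
\[
\int_A[\lambda(t)-\lambda_0(t)]Y_t\d t\;\geq\;\frac{n(1-\alpha)}{2}\,d_A(\lambda_1,\lambda_0)
\]
on $\Gamma_n$ (the constant $(1-\alpha)/2$ is obtained after collecting the factor $\tfrac{1+3\alpha}{2(1+\alpha)}$ from the $\tilde\mu_n$-part and subtracting the error $\tfrac{\alpha(3+\alpha)}{2(1+\alpha)}$ from the $Y_t$-vs-$n\tilde\mu_n$ discrepancy).

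The second step is to control the martingale $M^{A}_T$ via the concentration inequality \eqref{concentration} applied with $H=\1_A$, $b=1$, and $v=nv(\lambda)$ (a valid deterministic bound for $\int_A\lambda(t)Y_t\d t$ on $\Gamma_n$ by \eqref{Yn:mun} and \eqref{vl}). This yields, on $\Gamma_n$ except an event of probability at most $2e^{-u_A}$,
\[
M^{A}_T \;\geq\; -\sqrt{2nv(\lambda)u_A}\;-\;u_A/3.
\]
Combining with the drift lower bound, on the same event,
\[
N(A)-\int_A\lambda_0(t)Y_t\d t \;\geq\; \frac{n(1-\alpha)}{2}d_A(\lambda_1,\lambda_0)-\sqrt{2nv(\lambda)u_A}-\frac{u_A}{3},
\]
so that $1-\phi_{\lambda_1,A}(u_A)=0$ as soon as
\[
\sqrt{2nv(\lambda_0)u_A}+\sqrt{2nv(\lambda)u_A}+\tfrac{2}{3}u_A \;\leq\; \tfrac{n(1-\alpha)}{2}\,d_A(\lambda_1,\lambda_0).\qquad(\ast)
\]

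The last step is to choose $u_A$ so that $(\ast)$ holds. From \eqref{vl} and the hypothesis on $\lambda$, I have $v(\lambda_0)\leq(1+\alpha)m_2M_{\lambda_0}$ and $v(\lambda)\leq(1+\alpha)[\tilde M_n(\lambda_0)+C\|\lambda_1-\lambda_0\|_{\tilde\mu_n}]$ with an absolute constant $C$. When $\|\lambda_1-\lambda_0\|_{\tilde\mu_n}\leq 2\tilde M_n(\lambda_0)$, both $v(\lambda),v(\lambda_0)$ are of order $m_2M_{\lambda_0}$ so that $(\ast)$ reduces to $u_A\lesssim nd_A^2$, motivating the choice $u_A=u_{0A}\,n\,d_A^2(\lambda_1,\lambda_0)$. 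When $\|\lambda_1-\lambda_0\|_{\tilde\mu_n}>2\tilde M_n(\lambda_0)$, I use $v(\lambda)\lesssim\|\lambda_1-\lambda_0\|_{\tilde\mu_n}\lesssim d_A(\lambda_1,\lambda_0)$ together with the lower bound $d_A\geq\tilde M_n(\lambda_0)/1\geq m_1M_{\lambda_0}$ (so that the $v(\lambda_0)$ term is absorbed as well), yielding $(\ast)$ with $u_A=u_{1A}\,n\,d_A(\lambda_1,\lambda_0)$. The main obstacle is the explicit verification that the constants $u_{0A},u_{1A}$ can be taken to depend only on $\alpha,M_{\lambda_0},m_1,m_2$ and not on $\lambda_1$ or $\lambda$; this boils down to the two-sided control of $v(\lambda)$ by quantities expressible through $d_A(\lambda_1,\lambda_0)$ and $\tilde M_n(\lambda_0)$, which is exactly where the dichotomy in the statement originates. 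Finally, the second part of the lemma (for $\phi_{\lambda_1,A^c}$ under $d_A<d_{A^c}$) is proved by an entirely analogous argument, using the inequality $N(A^c)-\int_{A^c}\lambda(t)Y_t\d t\leq\sqrt{2nv(\lambda)u_{A^c}}+u_{A^c}/3$ from \eqref{concentration} in the lower-tail direction.
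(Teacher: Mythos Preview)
Your proposal is correct and follows essentially the same route as the paper. Both arguments rewrite the type-II error under $\P_\lambda$ as a deviation of the martingale $N(A)-\int_A\lambda Y_t\,\d t$, lower bound the drift $\int_A(\lambda-\lambda_0)Y_t\,\d t$ on $\Gamma_n$ by $\tfrac{n(1-\alpha)}{2}d_A(\lambda_1,\lambda_0)$, apply the concentration inequality \eqref{concentration} with $H=\1_A$, and then choose $u_A$ according to the dichotomy on $\|\lambda_1-\lambda_0\|_{\tilde\mu_n}$ after bounding $v(\lambda_0)$ and $v(\lambda)$. The only cosmetic differences are that the paper obtains the drift bound via $\int_A|\lambda_1-\lambda|Y_t\,\d t\leq\tfrac12\int_A(\lambda_1-\lambda_0)Y_t\,\d t$ (rather than your direct $\tilde\mu_n$-computation, which yields the same constant), and that the paper first bounds the error by $2e^{-r}$ for an intermediate $r$ and then checks $r\geq u_A$, whereas you work directly at level $u_A$ via your inequality $(\ast)$.
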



\bigskip



Note that, by \eqref{comp:norm}, if $d_A(\lambda_1,\, \lambda_0)\geq d_{A^c}(\lambda_1,\, \lambda_0)$, by virtue of Lemma \ref{lem:tests:aalen:error2},
\begin{eqnarray*}
\begin{split}
u_A&\geq\min\{u_{0A}n d_A^2(\lambda_1,\,\lambda_0),\,u_{1A}nd_A(\lambda_1,\,\lambda_0)\}\\
&\geq nd_A(\lambda_1,\,\lambda_0)\times\min\{u_{0A} d_A(\lambda_1,\,\lambda_0),\,u_{1A}\}\\
&\geq \frac{1}{2}nm_1\|\lambda_1-\lambda_0\|_1\times\min\left\{\frac{1}{2}u_{0A}m_1\|\lambda_1-\lambda_0\|_1,\,u_{1A}\right\}\\
&\geq  K_A n\| \lambda_1 - \lambda_0 \|_{1} \times\min\{\| \lambda_1 - \lambda_0 \|_1,\,m_1\},
\end{split}
\end{eqnarray*}
for $K_A$ a positive constant small enough only depending on $\alpha$, $M_{\lambda_0}$, $m_1$ and $m_2$.
Similarly, if $d_A(\lambda_1,\, \lambda_0) < d_{A^c}(\lambda_1,\, \lambda_0)$,
\begin{eqnarray*}
\begin{split}
u_{A^c}&\geq \frac{1}{2}nm_1\|\lambda_1-\lambda_0\|_1\times\min\left\{\frac{1}{2}u_{0A^c}m_1\|\lambda_1-\lambda_0\|_1,\,u_{1A^c}\right\}\\
&\geq K_{A^c} n\| \lambda_1 - \lambda_0 \|_{1} \times\min\{\| \lambda_1 - \lambda_0 \|_1,\,m_1\},
\end{split}
\end{eqnarray*}
for $K_{A^c}$  a positive constant small enough only depending on $\alpha$, $M_{\lambda_0}$, $m_1$ and $m_2$.
Now, we set
$$\phi_{\lambda_1}=\phi_{\lambda_1,A}(u_A)\1_{\left\{d_A(\lambda_1,\, \lambda_0)\geq d_{A^c}(\lambda_1,\, \lambda_0)\right\}}+\phi_{\lambda_1,A^c}(u_{A^c})\1_{\left\{d_A(\lambda_1,\, \lambda_0)< d_{A^c}(\lambda_1,\, \lambda_0)\right\}},$$
so that, with
$K=\min\{K_A,\,K_{A^c}\}$, by using \eqref{err1},
\begin{eqnarray*}
\begin{split}
\E^{(n)}_{\lambda_0,\Gamma_n}[\phi_{\lambda_1}]&= \E^{(n)}_{\lambda_0,\Gamma_n}[\phi_{\lambda_1,A}(u_A)]\1_{\left\{d_A(\lambda_1,\, \lambda_0)\geq d_{A^c}(\lambda_1,\, \lambda_0)\right\}}\\&\hspace*{4cm} +\E^{(n)}_{\lambda_0,\Gamma_n}[\phi_{\lambda_1,A^c}(u_{A^c})]\1_{\left\{d_A(\lambda_1,\, \lambda_0)< d_{A^c}(\lambda_1,\, \lambda_0)\right\}}\\
&\leq 2e^{-u_A}1_{\left\{d_A(\lambda_1,\, \lambda_0)\geq d_{A^c}(\lambda_1,\, \lambda_0)\right\}}+2e^{-u_{A^c}}1_{\left\{d_A(\lambda_1, \,\lambda_0)< d_{A^c}(\lambda_1,\, \lambda_0)\right\}}\\
&\leq 2\exp\left(- K n\| \lambda_1 - \lambda_0 \|_{1} \times\min\{\| \lambda_1 - \lambda_0 \|_1,\,m_1\} \right).
\end{split}
\end{eqnarray*}
If $\|\lambda - \lambda_1 \|_{1} <\xi\| \lambda_1 - \lambda_0 \|_1$, $ \xi=m_1(1-\alpha)/[4m_2(1+\alpha)]$,
then
$$\| \lambda - \lambda_1\|_{\tilde\mu_n} \leq \frac{1-\alpha}{ 4(1+\alpha)} \| \lambda_1 -\lambda_0\|_{\tilde\mu_n}$$
and Lemma \ref{lem:tests:aalen:error2} shows that
\begin{eqnarray*}
\begin{split}
\E^{(n)}_{\lambda,\Gamma_n}[1-\phi_{\lambda_1}]&\leq 2e^{-u_A}\1_{\left\{d_A(\lambda_1,\, \lambda_0)\geq d_{A^c}(\lambda_1, \, \lambda_0)\right\}}+2e^{-u_{A^c}}\1_{\left\{d_A(\lambda_1,\, \lambda_0)< d_{A^c}(\lambda_1,\, \lambda_0)\right\}}\\
&\leq 2\exp\left(- K n\| \lambda_1 - \lambda_0 \|_{1} \times\min\{\| \lambda_1 - \lambda_0 \|_1,\,m_1\} \right),
\end{split}
\end{eqnarray*}
which completes the proof of Lemma \ref{lem:tests}.
\end{proof}

\begin{proof}[Proof of Lemma \ref{lem:tests:aalen:error2}]
We only consider the case where $d_A(\lambda_1,\, \lambda_0)\geq d_{A^c}(\lambda_1,\, \lambda_0)$.
The case $d_A(\lambda_1,\, \lambda_0)< d_{A^c}(\lambda_1,\, \lambda_0)$ can be dealt with using similar arguments.
So, we assume that $d_A(\lambda_1,\, \lambda_0) \geq d_{A^c}(\lambda_1, \,\lambda_0)$. On $\Gamma_n$ we have
\begin{eqnarray*}
\begin{split}
\int_A[\lambda_1(t)-\lambda_0(t)]Y_t\d t&\geq n(1-\alpha)\int_A[\lambda_1(t)-\lambda_0(t)]\tilde\mu_n(t)\d t\\
&\geq \frac{n(1-\alpha)}{2}\|\lambda_1-\lambda_0\|_{\tilde\mu_n}\\ &\geq2n(1+\alpha)\|\lambda-\lambda_1\|_{\tilde\mu_n}\\
&\geq 2n(1+\alpha)\int_A|\lambda(t)-\lambda_1(t)|\tilde\mu_n(t)\d t
\geq 2\int_A|\lambda(t)-\lambda_1(t)|Y_t\d t.
\end{split}
\end{eqnarray*}
Therefore,
\begin{eqnarray*}
\begin{split}
\E^{(n)}_{\lambda,\Gamma_n}[1- \phi_{\lambda_1,A}(u_A)] &= \P^{(n)}_{\lambda,\Gamma_n} \left(N(A)- \int_{A} \lambda(t)Y_t\d t <  \rho_n(u_A)+ \int_{A} (\lambda_0-\lambda)(t)Y_t\d t  \right)  \\
&= \P^{(n)}_{\lambda,\Gamma_n}  \left( N(A)- \int_{A} \lambda(t)Y_t\d t <  \rho_n(u_A)- \int_{A} (\lambda_1-\lambda_0)(t)Y_t\d t\right. \\
&\hspace{6.7cm}+\left.\int_{A} (\lambda_1-\lambda)(t)Y_t\d t\right) \\
&\leq \P^{(n)}_{\lambda,\Gamma_n}  \left( N(A)- \int_{A} \lambda(t)Y_t\d t <  \rho_n(u_A)- \frac{1}{2}\int_{A} (\lambda_1-\lambda_0)(t)Y_t\d t\right).
\end{split}
\end{eqnarray*}

Assume that $\|\lambda_1-\lambda_0\|_{\tilde\mu_n}\leq 2\tilde M_n(\lambda_0)$. This assumption implies that
$d_A(\lambda_1,\,\lambda_0)\leq\|\lambda_1-\lambda_0\|_{\tilde\mu_n}\leq 2\tilde M_n(\lambda_0)\leq 2m_2M_{\lambda_0}$.
Since
$v(\lambda_0)= (1+\alpha)\tilde M_n(\lambda_0),$ with
$u_A=u_{0A}n d_A^2(\lambda_1,\, \lambda_0),$
where $u_{0A}\leq 1$ is a constant depending on $\alpha$, $m_1$ and $m_2$ chosen later, we  have
$$ \rho_n(u_A) \leq n d_A(\lambda_1,\, \lambda_0)\sqrt{2u_{0A}(1+\alpha)\tilde M_n(\lambda_0)}+ \frac{ u_{0A} n d_A^2(\lambda_1,\, \lambda_0)}{ 3 }  \leq K_1\sqrt{u_{0A}}  n d_A(\lambda_1, \,\lambda_0) $$
as soon as $K_1\geq[2(1+\alpha)\tilde M_n(\lambda_0)]^{1/2}+ {2\tilde M_n(\lambda_0)\sqrt{ u_{0A}}}/{3}$. Note that
the definition of $v(\lambda)$ in \eqref{vl} gives
\begin{eqnarray*}
\begin{split}
v(\lambda) &= (1+\alpha)\int_\Omega\lambda_0(t)\tilde\mu_n(t)\d t+ (1+\alpha)\int_\Omega[\lambda(t)-\lambda_0(t)]\tilde\mu_n(t)\d t\\
&\leq v(\lambda_0)+ (1+\alpha)\|\lambda -\lambda_0\|_{\tilde\mu_n}\\
&\leq  v(\lambda_0) + (1+\alpha) \left[ \|\lambda -\lambda_1\|_{\tilde\mu_n} + \|\lambda_1 -\lambda_0\|_{\tilde\mu_n}\right] \\
&\leq v(\lambda_0) + \frac{5+3\alpha}{4}\|\lambda_1 -\lambda_0\|_{\tilde\mu_n} \leq C_1,
\end{split}
\end{eqnarray*}
where $C_1$ only depends on $\alpha,$ $M_{\lambda_0},$ $m_1$ and $m_2$.
Combined with \eqref{Yn:mun},  this implies that, on $\Gamma_n$, if $K_1\leq {(1-\alpha)}/[4\sqrt{u_{0A}}]$, which is true for $u_{0A}$ small enough,
\begin{equation*}
\begin{split}
 \frac{1}{2}\int_{A} (\lambda_1-\lambda_0)(t)Y_t\d t-  \rho_n(u_A)
&\geq \frac{(1-\alpha)n}{ 2 } d_A(\lambda_1,\, \lambda_0) \left[ 1 - \frac{2K_1\sqrt{u_{0A}}}{1-\alpha}\right] \\
&\geq
 \frac{ (1-\alpha)n}{4 } d_A(\lambda_1,\, \lambda_0) \geq \sqrt{ 2 nC_1r} + \frac{r}{3}\geq \sqrt{ 2 nv(\lambda) r} + \frac{r}{3},
\end{split}
\end{equation*}
with
$$ r = n\min \left\{
\frac{(1-\alpha)^2  }{ 128 C_1 }d_A^2(\lambda_1,\, \lambda_0)  , \,\frac{ 3 (1-\alpha) }{ 8}d_A(\lambda_1,\, \lambda_0) \right\}.$$
Inequality \eqref{concentration} then leads to
\begin{equation}\label{r}
\E^{(n)}_{\lambda,\Gamma_n}[1- \phi_{\lambda_1,A}(u_A) ] \leq 2e^{-r}.
\end{equation}
For $u_{0A}$ small enough only depending on $M_{\lambda_0}$, $\alpha$, $m_1$ and $m_2$, we have
$$\frac{(1-\alpha)}{4\sqrt{u_{0A}}}\geq \sqrt{2(1+\alpha)\tilde M_n(\lambda_0)}+ \frac{2\tilde M_n(\lambda_0)\sqrt{u_{0A}}}{3}$$
so (\ref{r}) is true. Since $r\geq u_A$ for $u_{0A}$ small enough, then
$$
\E^{(n)}_{\lambda,\Gamma_n}[1- \phi_{\lambda_1,A}(u) ] \leq 2e^{-u_A}.
$$

Assume that $\|\lambda_1-\lambda_0\|_{\tilde\mu_n}> 2\tilde M_n(\lambda_0)$. We take
$u_A=u_{1A}nd_A(\lambda_1,\,\lambda_0),$
where $u_{1A}\leq 1$ is a constant depending on $\alpha$ chosen later. We still consider the same test $\phi_{\lambda_1, A}(u_A)$. Observe now that, since $d_A(\lambda_1,\,\lambda_0)\geq \frac{1}{2}\|\lambda_1 -\lambda_0\|_{\tilde\mu_n}\geq \tilde M_n(\lambda_0)$,
\begin{eqnarray*}
\begin{split}
\rho_n(u_A)&= \sqrt{2nu_Av(\lambda_0)}+\frac{u_A}{3}\\
&\leq n\sqrt{2(1+\alpha)u_{1A}\tilde M_n(\lambda_0)d_A(\lambda_1,\,\lambda_0)}+\frac{nu_{1A}}{3}d_A(\lambda_1,\,\lambda_0)\\
&\leq \left[\sqrt{2(1+\alpha)}+\frac{1}{3}\right]n\sqrt{u_{1A}}d_A(\lambda_1,\,\lambda_0)
\end{split}
\end{eqnarray*}
and, under the assumptions of the lemma,
\begin{equation}\label{v}
v(\lambda)\leq(1+\alpha)\tilde M_n(\lambda_0) + (1+\alpha) \left[\|\lambda -\lambda_1\|_{\tilde\mu_n} + \|\lambda_1 -\lambda_0\|_{\tilde\mu_n}\right]\leq C_2d_A(\lambda_1,\,\lambda_0),
\end{equation}
where $C_2$ only depends on $\alpha.$ Therefore,
\begin{eqnarray*}\begin{split}
&\frac{1}{2}\int_{A} (\lambda_1-\lambda_0)(t)Y_t\d t-\rho_n(u_A)\\&\hspace*{1.5cm}\geq \frac{n(1-\alpha)}{2}\int_A[\lambda_1(t)-\lambda_0(t)]\tilde\mu_n(t)\d t -\left(\sqrt{2(1+\alpha)}+\frac{1}{3}\right)\sqrt{u_{1A}}nd_A(\lambda_1,\,\lambda_0)\\
&\hspace*{1.5cm}\geq \left[\frac{1-\alpha}{2}-\left(\sqrt{2(1+\alpha)}+\frac{1}{3}\right)\sqrt{u_{1A}}\right]nd_A(\lambda_1,\,\lambda_0)\\
&\hspace*{1.5cm}\geq \frac{1-\alpha}{4}nd_A(\lambda_1,\,\lambda_0),
\end{split}
\end{eqnarray*}
where the last inequality is true for $u_{1A}$ small enough depending only on $\alpha$. Finally, using (\ref{v}), since $u_A=u_{1A}nd_A(\lambda_1,\,\lambda_0)$, we have
\begin{eqnarray*}
\begin{split}
\frac{1-\alpha}{4}nd_A(\lambda_1,\,\lambda_0)&\geq\sqrt{2nC_2d_A(\lambda_1,\,\lambda_0)u_{1A}nd_A(\lambda_1,\,\lambda_0)}+\frac{1}{3}u_{1A}nd_A(\lambda_1,\,\lambda_0)\\
&\geq \sqrt{2nv(\lambda)u_A}+\frac{u_A}{3}
\end{split}
\end{eqnarray*}
for $u_{1A}$ small enough depending only on $\alpha$. We then obtain
$$\E^{(n)}_{\lambda,\Gamma_n}[1- \phi_{\lambda_1,A}(u_A)] \leq 2e^{-u_A},$$
which completes the proof.
\end{proof}


\bibliographystyle{ba}



\end{document}